\title{PLS-completeness of string permutations} 
\author{Dominik Scheder}{TU Chemnitz, Germany}{dominik.scheder@informatik.tu-chemnitz.de}{0000-0002-9360-7957}{}
\author{Johannes Tantow}{TU Chemnitz, Germany}{johannes.tantow@informatik.tu-chemnitz.de}{0009-0006-0408-6966}{}
\authorrunning{D. Scheder, J. Tantow} 
\keywords{PLS, total search problems, local search, permutation groups, symmetry} 
\newcommand{\ignore}[1]{{\color{blue}hidden stuff here; delete 
or remove ignore command}}
\newcommand{\x}{\mathbf{x}}
\newcommand{\y}{\mathbf{y}}
\newcommand{\e}{\mathbf{e}}
\newcommand{\globalmin}[1]{\textsc{#1-Permutation Global Orbit Minimum}}
\newcommand{\localmin}[1]{\textsc{#1-Permutation Local Orbit Minimum}}
\newcommand{\localminsolution}[1]{\textsc{Locally Minimal Solution}}
\newcommand{\group}[1]{\langle #1 \rangle}
\newcommand{\cost}{\textnormal{cost}}
\newcommand{\nth}[1]{#1\textsuperscript{th}}
\newcommand{\N}{\mathbb{N}}
\newclass{\CLS}{CLS}
\newlang{\FLIP}{FLIP}
\newlang{\DCR}{DCR}
\newlang{\sat}{SAT}
\newcommand{\Z}{\mathbb{Z}}
\begin{document}

\maketitle

\begin{abstract}
Bitstrings can be permuted via permutations and compared via the lexicographic order. In this paper we study the complexity of finding a minimum of a bitstring via given permutations. As finding a global optima is known to be \NP-complete\cite{DBLP:conf/stoc/BabaiL83}, we study the local optima via the class \PLS\cite{DBLP:journals/jcss/JohnsonPY88} and show hardness for \PLS. Additionally, we show that even for one permutation the global optimization problem is \NP-complete and give a formula that has these permutation as its symmetries. This answers an open question inspired from 
Kołodziejczyk and Thapen~\cite{DBLP:conf/sat/KolodziejczykT24} and stated at the \textit{SAT and interactions} seminar in Dagstuhl\cite{ThapenPersonal2024}.
\end{abstract}

\section{Introduction}
Given a string $\x \in \{0,1\}^n$ and a couple of permutations in $S_n$, 
we can apply a permutation to $\mathbf{x}$ and obtain a new bit string. What is the lexicographically 
smallest string we can obtain this way? This problem is known \cite{DBLP:conf/stoc/BabaiL83} to be \NP-hard. What about 
finding a local minimum, i.e., arriving at a bit string that cannot be further improved by 
a single application of a permutation? In this paper, we show that this problem is PLS-complete.\\
To be more formal, we are given a string $\x \in \{0,1\}^n$ and permutations $\pi_1, \dots, \pi_m$ on the set 
$[n]$. When we view $\x$ as a function $[n] \rightarrow \{0,1\}$, the notation $\x \circ \pi$ makes sense 
and is the string obtained from $\x$ by permuting the coordinates according to $\pi$. By 
$\group{\pi_{1},\dots,\pi_m}$ we denote the subgroup of $S_{[n]}$ generated by the $\pi_i$. The 
 problem \globalmin{$k$} asks for the $\pi \in \group{\pi_1,\dots,\pi_k}$ such that 
 $\x \circ \pi$ is lexicographically minimal. Babai and Luks \cite{DBLP:conf/stoc/BabaiL83} showed that this is \NP-hard even for $k=2$.
 In fact, we will see that it is \NP-hard even for $k=1$, i.e., a single permutation.\\

\localmin{$k$} asks for a local minimum. 
 That is, an element $\pi \in \group{\pi_1,\dots,\pi_k}$ such that 
 \begin{align*}
    \x \circ \pi
    \preceq_{\rm lex}
    \x \circ \pi \circ \pi_i 
    \quad \textnormal{ for all $1 \leq i \leq k$} \ , 
 \end{align*}
 i.e., a single application of a permutation $\pi_i$ cannot further improve the string $\x \circ \pi$.\\
A local optimum always exists and hence this is an instance of a total search problem. Total search problems where solutions are recognizable in polynomial time  form the class \TFNP.
Total search problems that can be stated as finding 
a local optimum with respect to a certain 
{\em cost function} and a {\em neighborhood relation}
constitute the subclass \PLS\
(polynomial local search).
Known  hard problems for \PLS\ include finding a pure Nash-equilibrium in a congestion game\cite{DBLP:conf/stoc/FabrikantPT04} or finding a locally optimal max cut (\textsc{LocalMaxCut}) \cite{DBLP:journals/siamcomp/SchafferY91}. Almost all known \PLS-complete problems require quite involved cost functions. Our problem \localmin{$k$} has the benefit 
of using the possibly simplest cost function - 
the lexicographic ordering. 
The only other \PLS-complete problem using 
a lexicographic cost function that we know of is 
\FLIP, which asks to minimize the $m$-bit
output of a circuit $C$, where the solutions 
are all $n$-bit inputs and the neighborhood relation
is defined by flipping a single bit. 

Thus, our result unifies two desirable properties - 
our \PLS-complete problem is very combinatorial 
in nature (in contrast to \FLIP) and uses 
a very simple cost function (in contrast to 
\textsc{LocalMaxCut})

\subsection{SAT Solving and Symmetry Breaking}

When encoding a combinatorial problem as a CNF formula $F$ 
(think of ``is there a $k$-Ramsey graph on $n$ vertices?''), 
the formula will often contain many symmetries. To make the problem
easier for SAT solvers, one can take the statement
\begin{quotation}
\emph{
The satisfying assignment $\alpha$ should be a local lexicographical minimum
with respect to those symmetries,}
\end{quotation}
encode it as a CNF formula $G$ and feed $F \wedge G$ to 
the SAT solver. Clearly, $F \wedge G$  is satisfiable 
if and only if $F$ is.
In case that $F \wedge G$ is unsatisfiable, 
SAT solvers are often expected to produce a proof of 
unsatisfiability. A popular proof system used in this context is DRAT~\cite{DBLP:conf/sat/WetzlerHH14}. 
However, it is not known to what extend DRAT can handle 
symmetry breaking~\cite{DBLP:conf/cade/HeuleHW15}, that is, 
whether a short DRAT-refutation of $F \wedge G$ can be transformed
into a short DRAT-refutation of $F$. In this context, 
Thapen~\cite{ThapenPersonal2024} asked whether there exists 
a polynomial algorithm that, given a CNF formula $F$, 
a handful of symmetries thereof, and a satisfying assignment $\alpha$,
finds a satisfying assignment $\beta$ that is a local 
lexicographical minimum with respect to those symmetries. 
In this paper, we show that this problem is \PLS-complete, which 
is evidence that such a polynomial time algorithm might not exist.

\section{Preliminaries}

\subsection{Total search problems and \PLS}

The class \FNP\ is the functional correspondent of \NP.
\TFNP\ $\subset$ \FNP\ is the subset of {\em total} 
search problems, i.e., problems that always have a solution. As this is a semantic class, \TFNP{} has no known complete problems, and thus it is usually studied via its subclasses. These subclasses are based on the combinatorial principle that proves the existence of a solution. These principles include the existence of sinks in directed acyclic graphs (\PLS)\cite{DBLP:journals/jcss/JohnsonPY88}, the parity argument for directed and undirected graphs (\PPAD, \PPA) or the pigeonhole principle (\PPP) (all introduced in \cite{DBLP:conf/focs/Papadimitriou90}). Nonetheless, not all problems in \TFNP\ can be categorized in one of the known subclasses, \textsc{Factoring} being a prime example.

By the above characterization, \PLS\ requires finding a sink of a directed acyclic graph $G$. This would be possible in polynomial time if $G$ was given explicitly. Instead, $G$ is always given implicitly via a circuit that computes the successor list of a given node. To make sure that $G$ is acyclic, 
we have a second circuit computing a topological ordering, that 
is, a ``cost'' function that is strictly decreasing along 
the edges. A solution to the problem is a sink of $G$ 
or an edge $(u,v)$ with $\cost(u) \leq \cost(v)$, i.e., 
violating the decreasing cost condition. This 
guarantees the totality of the problem.  

An alternative definition is the following. For a \PLS\ problem $P$ we have a set of instances $I$. Each instance $i \in I$ has a set of feasible solutions $S$ (e.g. for the Euclidean traveling salesman problem the solutions are exactly 
the Hamilton cycles of $K_n$). Additionally, we require the following polynomial-time computable algorithms (usually given as circuits):
\begin{enumerate}
    \item an algorithm that decides whether a given $s$ 
    is a feasible solution.
    \item an algorithm that computes a starting solution $s \in S$
    \item an algorithm that computes for a feasible solution $s$ the neighborhood $N(s)$
    \item an algorithm that computes for a solution $s$ the cost $\cost(s)$ 
\end{enumerate}

A feasible solution $s$ is a {\em local minimum} if 
$\cost(s) \le \cost (s')$ for all $s' \in N(s)$. The definition is given for a minimization problem, but can be also defined in terms of a maximization problem. 

We can easily transform this into a directed acyclic graph by keeping only those neighbors in $N(s)$ having strictly 
smaller cost--or even keeping only the one neighbor 
$s' \in N(s)$ of minimal cost (breaking ties arbitrarily).

Similar to \NP, there is also a hardness structure in \PLS. Let $P$ and $Q$ be two problems in \PLS. $P$ reduces to $Q$ via a \emph{\PLS-reduction} $(f,g)$ for functions $f$ and $g$ such that $f$ maps an instance $I$ from $P$ to an instance $f(I)$ of $Q$ and $g$ maps a solution $s$ of $f(I)$ to a solution $g(I, s)$ of $P$ so that if $s$ is a local minimum in $f(I)$ then also $g(I,s)$ is a local minimum in $I$. This was defined in \cite{DBLP:journals/jcss/JohnsonPY88} and the first natural \PLS-complete problem is \FLIP. There solutions are $n$-bit strings, the cost is calculated by a given circuit $C$ and the neighborhood are all $n$-bit strings with a Hamming distance of 1.

The obvious greedy algorithm to find a solution for a \PLS-problem is as follows: Use the given algorithms to compute the start solution and always select the best neighbor until there is no better solution. This solution is called the \emph{standard solution} and the algorithm the \emph{standard algorithm}. For the problem \FLIP, finding the standard solution for a given start solution is \PSPACE-complete\cite[Lemma 4]{DBLP:conf/stoc/PapadimitriouSY90}.

Reductions that preserve the \PSPACE-completeness are called tight\cite{DBLP:journals/siamcomp/SchafferY91}. For this, we consider the transition graph $TG(I)$ of an instance $I$ of the problem $P$, that has a directed edge from each feasible solution $x$ to all of its neighbors $N(x)$.

A \PLS-reduction $(f,g)$ from $P$ to $Q$ is called \emph{tight} if for every instance $I$ of $P$ there exists a set $\mathcal{R}$ of feasible solutions for $f(I)$ such that
\begin{enumerate}
    \item $\mathcal{R}$ contains all local optima of $f(I)$
    \item For every solution $s$ of $I$ it is possible to construct in polynomial time a feasible solution $t \in \mathcal{R}$ such that $g(I,t) = s$
    \item If the transition graph of $TG(f(I))$ contains a path from $q$ to $q'$ such that both $q$ and $q'$ are in $\mathcal{R}$ and all other intermediate nodes are not in $\mathcal{R}$, let $p = g(I, q)$ and $p' = g(I, q')$ be the corresponding solutions in $P$. Then either $p = p'$ or there is an arc from $p$ to $p'$ in $TG(I)$.
\end{enumerate}

An interesting subclass of \PLS\ is \CLS\ that is supposed to capture continuous local search problems. Recently, it was shown that \CLS\ $=$ \PLS\ $\cap$ \PPAD\cite{DBLP:conf/stoc/FearnleyGHS21}.

\subsection{Permutation groups}

A permutation of a set $V$ is a bijection $\pi: V \rightarrow V$. 
For permutations $\pi_1, \dots, \pi_k$, we denote by 
$\group{\pi_1, \dots, \pi_k}$ the subgroup of $S_V$ generated
by the $\pi_i$. Checking membership
$\pi \in \group{\pi_1,\dots,\pi_k}$ is non-trivial 
but can be done in polynomial time~\cite[Section 1]{DBLP:conf/focs/FurstHL80}. 

\section{Related Work}

Many problems are known to be \PLS-complete, whose reductions
mostly start from \FLIP.  An influential reduction technique was used by Krentel~\cite{Krentel89} to show that finding a local minimum of a weighted CNF-formula is \PLS-complete. The idea is to have multiple copies of the circuit, so called \emph{test circuits} that precompute the effects of a flip. 
This idea is pushed further in \cite{DBLP:journals/siamcomp/SchafferY91} in order to show that finding a local minimum for weighted positive NAE(not all equal) 3-SAT is \PLS-complete, and this is used to show that other problems as finding a \textsc{LocalMaxCut} or finding a stable configurations of Hopfield networks are \PLS-complete.

Other direct reductions from \FLIP\ are used in \cite{DBLP:conf/stoc/FabrikantPT04} to show that finding a pure Nash equilibrium in an asymmetric network congestion game is \PLS-complete and in \cite{DBLP:journals/tcs/DumraufM13} to show that maximum constraint assignment, a generalization of CNF-SAT, is \PLS-complete. These reductions are especially of interest to us as \FLIP\ is to the best of our knowledge the only problem in \PLS\ with lexicographical weights and hence needed for our reduction.

Numerous optimization problems on permutation groups given via generators $\pi_1, \dots ,\pi_k$ are studied in \cite{DBLP:journals/disopt/BuchheimJ05}.
These include finding a $\pi \in \group{\pi_1, \dots, \pi_k}$ that minimizes $\Sigma_{i \in V} c(i, \pi(i))$ for some cost function $c: V^2 \to \mathbb{R}$. All these problems are 
shown to be \NP-complete via a reduction from finding a fixed-point free permutation, which is shown to be \NP-complete.

Our problem has previously been studied in \cite{DBLP:conf/stoc/BabaiL83}, but there the interest was in {\em global} optima. This was proven to be \NP-hard even for abelian groups.

\section{The one permutation problem}

We start our investigation by studying the special case that $k=1$, i.e., 
we have only one generator $\pi_1$ and our subgroup 
is $\group{\pi_1}$. In this case one can efficiently find a local optimum\cite[Section 8.2]{DBLP:conf/sat/KolodziejczykT24}. We show this with a different algorithm again. In contrast, we will show that surprisingly finding 
a global optimum is \NP-complete. To the best of our knowledge, this has only been known for {\em two} permutations~\cite{DBLP:conf/stoc/BabaiL83}.

\subsection{Finding a local optimum}

The problem is efficiently solvable when $k=1$, i.e., 
we are only given a single permutation $\pi$. We describe a different way of solving it in contrast to \cite{DBLP:conf/sat/KolodziejczykT24}. We can transform $\pi$ into the cycle notation and annotate each element with the bit that it is mapped to by the string $\x \in \{0,1\}^n$. We call a cycle {\em interesting} if $\x$ is non-constant on it. Additionally, we identify each cycle with its smallest member.

Consider the permutation $(1\ 2\ 5) (3\ 4) (7\ 8)$ and the string $001001$ depicted in \cref{ex:onePerm}. We color an element green if it is mapped to zero and orange if it is mapped to one by the string.
The left cycle is the cycle of $1$ and not interesting whereas the other two are which are identified by $3$ and $7$.

\vspace{.5cm}

\begin{figure}
\centering
\begin{tikzpicture}
    \node[circle, draw, fill=green] (one) at (0,0) {1};
    \node[circle, draw, fill=green] (two) at (2,0) {2};
    \node[circle, draw, fill=green] (five) at (1,1) {5};
    \node[circle, draw, fill=orange] (three) at (4,0) {3};
    \node[circle, draw, fill=green] (four) at (4,1) {4};
    \node[circle, draw, fill=green] (seven) at (6,0) {7};
    \node[circle, draw, fill=orange] (eight) at (6,1) {8};

    \draw[->] (one) -- (two);
    \draw[->] (two) -- (five);
    \draw[->] (five) -- (one);
    \draw[->] (three) -- (four);
    \draw[->] (four) -- (three);
    \draw[->] (seven) -- (eight);
    \draw[->] (eight) -- (seven);

\end{tikzpicture}
\caption{An example of the scenario for one permutation $\pi$}
\label{ex:onePerm}
\end{figure}

The permutation has no effect on elements on non-interesting cycles. Due to the cost function, lower indices are more costly than higher indices. We look for the interesting cycle that contains the position with the least index among all interesting cycles and 
let $l$ be the smallest index on it. 
There are indices
$i$ and $j := \pi(i)$ on this cycle with $x_i=0$ and $x_j=1$.
Let $k$ be such that $\pi^k(l) = i$. Then 
$\x \circ \pi^k$ has a $0$ at position $i$ but 
$\x \circ \pi^{k+1}$ has a $1$. In other words,
$\x \circ \pi^k$ is a local optimum

In example in \cref{ex:onePerm} we have $x=3$, $j=4$ and $d=1$. The local optimal permutation is hence $\pi$.

\subsection{Finding a global optima}

\begin{theorem}[\cite{DBLP:conf/stoc/BabaiL83}]
\globalmin{$2$} is \NP-hard.
\end{theorem}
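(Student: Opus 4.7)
The plan is to give a polynomial-time Karp reduction from a well-known \NP-complete combinatorial problem to \globalmin{$2$}. Since the statement is cited from Babai and Luks~\cite{DBLP:conf/stoc/BabaiL83}, I would follow their general strategy: encode a discrete search space as the orbit of a bitstring under a two-generator (in fact, abelian) permutation group, so that lexicographic minimization selects an optimal solution. A convenient starting problem is SET COVER, or equivalently EXACT COVER BY 3-SETS: given a universe $U$, a family $\mathcal{F} = \{S_1,\dots,S_m\}$, and a target $k$, decide whether $U$ admits a cover of size at most $k$.

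First I would partition the coordinate set $[n]$ into three blocks whose ordering reflects what we want to optimize: a \emph{coverage block} of highest order that carries a $1$ for each universe element still uncovered, a \emph{counter block} of middle order whose Hamming weight upper-bounds the number of sets chosen, and a \emph{selection block} of lowest order with one coordinate per set $S_i$. The initial string $\x$ would be chosen so that lex-minimization first pushes the coverage block to zero (producing a feasible cover) and only then shrinks the counter (producing a small cover).

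Next I would design the two generators $\pi_1, \pi_2$ so that $\group{\pi_1,\pi_2}$ realizes arbitrary subset choices over $\mathcal{F}$. One natural approach is to let $\pi_1$ be a product of disjoint transpositions that simultaneously toggles a single selection bit together with the coverage bits of its set and one counter bit, while $\pi_2$ is a cyclic shift over the $m$ sets. Since $\pi_2^{j} \pi_1 \pi_2^{-j}$ behaves like a ``toggle'' for set $S_{1+j}$, the group generated contains independent toggles for every set, giving access to all $2^m$ subset choices through a group with only two generators.

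The main obstacle, and the technically delicate part, is ensuring \emph{consistency}: every element of $\group{\pi_1,\pi_2}$ must correspond to a well-defined subset, and the coverage and counter blocks must be updated faithfully under the group action, without the cyclic shift in $\pi_2$ creating unintended coupling between different sets. This typically requires padding with auxiliary coordinates and carefully aligning the cycle structure of $\pi_2$ with the involutions comprising $\pi_1$ so that the coverage/counter updates compose correctly across arbitrary words in $\pi_1, \pi_2$. Once consistency is in place, the block ordering guarantees that the lex-minimum of $\x$ under $\group{\pi_1,\pi_2}$ decodes directly to a minimum-size cover, so that an oracle for \globalmin{$2$} decides SET COVER in polynomial time and the reduction is complete.
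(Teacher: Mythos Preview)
Your plan departs substantially from the paper's (i.e., Babai--Luks') argument and, as written, has a genuine gap. The paper reduces from \textsc{Independent Set}: a graph $G$ on $n$ vertices is encoded as its edge-indicator string $\x \in \{0,1\}^{\binom{n}{2}}$, one bit per unordered pair. The symmetric group $S_n$ acts on vertices and thereby on pairs, and $S_n$ is generated by two permutations (say a transposition and an $n$-cycle). Any vertex ordering that places an independent set $I$ first yields a prefix of $\binom{|I|}{2}$ zeros, so deciding whether the lexicographic minimum of the orbit has at least $\binom{k}{2}$ leading zeros decides whether $G$ has an independent set of size $k$. There are no ``toggle'' gadgets, no counters, and no consistency side-conditions---the group is simply all of $S_n$ acting on pairs.

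Your Set-Cover plan, by contrast, leaves the decisive step unresolved. First, permutations rearrange \emph{positions}; they do not flip bit values, so ``a product of disjoint transpositions that toggles a selection bit'' already presupposes the two-positions-per-bit encoding the paper introduces much later, and you do not spell it out. Second, even granting that, the conjugation idea fails as stated: if $\pi_1$ touches the coverage positions for the elements of $S_1$ and $\pi_2$ is a uniform cyclic shift on the selection block, then $\pi_2^{j}\pi_1\pi_2^{-j}$ moves the selection action to $S_{1+j}$ but still acts on \emph{$S_1$'s} coverage positions unless $\pi_2$ also permutes the coverage block in a way that matches each set's (arbitrary, set-specific) coverage pattern---which a single cyclic shift cannot do. Third, $\pi_2$ itself lies in $\group{\pi_1,\pi_2}$, so the orbit contains strings corresponding to no subset at all; nothing prevents one of these from being the lexicographic minimum. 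Finally, lexicographic order on the counter block is not Hamming-weight order, so ``counter block records number of sets chosen'' needs more than a single bit per toggle. You flag ``consistency'' as the main obstacle, but the obstacle is the whole reduction: the Independent-Set route avoids all of this at once.
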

This follows via a reduction from 
Independent Set where we encode a graph as a bit 
string, one bit per potential edge, 
and the permutations basically allow 
us to move the vertices of the independent 
set $I$ to the front, generating a large prefix 
of ${|I| \choose 2}$ many $0$'s.

Since \localmin{$1$} was so clearly solvable in polynomial time (even by the greedy algorithm), 
it comes as a surprise that global optimization, even for {\em one} permutation, is \NP-hard:

\begin{theorem}
\label{theorem-global-1-hard}
\globalmin{$1$} is \NP-hard.
\end{theorem}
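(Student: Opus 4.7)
The plan is to reduce 3-SAT to \globalmin{$1$} via the Chinese Remainder Theorem: the cyclic group $\group{\pi}$ will be sized so that each variable $v_i$ corresponds to an independent residue $k \bmod p_i$ of the exponent $k$. Given a 3-SAT formula $\phi$ with $n$ variables $v_1, \dots, v_n$ and $m$ clauses, let $p_1, \dots, p_n$ be the first $n$ primes (each of size $O(n \log n)$). The permutation $\pi$ is a disjoint union of cycles, built as follows: for each variable $v_i$, include a \emph{variable cycle} $V_i = (c_{i,0}, c_{i,1}, \dots, c_{i,p_i-1})$ of length $p_i$, with $c_{i,0}$ placed at position $i$ of $\x$; for each clause $C_j$ on variables $v_a, v_b, v_c$, include a \emph{clause cycle} $K_j$ of length $L_j := p_a p_b p_c$, with its first element placed at position $n+j$. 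All other cycle elements go into the tail of $\x$ (positions $> n + m$).

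Since the $p_i$'s are distinct primes, $\operatorname{ord}(\pi) = \prod_i p_i$ and each $L_j$ divides $\operatorname{ord}(\pi)$; by CRT, choosing $k \in \Z_{\operatorname{ord}(\pi)}$ is the same as independently choosing each residue $k \bmod p_i$, and $k \bmod L_j$ is the CRT combination of the three relevant residues. The bits of $\x$ are assigned accordingly. On $V_i$, put $0$ at $c_{i,0}$ and $c_{i,1}$ and $1$ at every other position of $V_i$, so the $i$-th bit of $\x \circ \pi^k$ is $0$ iff $k \bmod p_i \in \{0,1\}$; we interpret these two residues as the boolean values false and true of $v_i$, and any other residue as ``invalid.'' On $K_j$, put a single $1$ at the unique CRT-determined position of $K_j$ corresponding to the ``all three literals false'' assignment to $(v_a, v_b, v_c)$, and $0$ elsewhere on $K_j$, so the $(n+j)$-th bit of $\x \circ \pi^k$ is $1$ iff the encoded assignment falsifies $C_j$.

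Then the lex-smallest $\x \circ \pi^k$ has its first $n + m$ bits all equal to $0$ iff some $k$ encodes a valid boolean assignment satisfying every clause, iff $\phi$ is satisfiable; so an algorithm for \globalmin{$1$} decides 3-SAT by inspecting these bits. The construction is polynomial in size ($\sum_i p_i + \sum_j L_j = O(n^2 \log n + m n^3 \log^3 n)$) and polynomial-time computable. The only delicate step is checking that the ``tail'' positions cannot preempt the minimization of the first $n + m$ bits, but this is automatic: lex comparison is strictly prefix-first, so any $k^*$ zeroing out the first $n + m$ bits beats any $k$ that does not, regardless of what happens in the tail. Hence \globalmin{$1$} is \NP-hard.
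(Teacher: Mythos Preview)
Your proof is correct and shares the paper's central mechanism: assign a distinct prime $p_i$ to each variable and let $\pi$ be a disjoint union of cycles of prime and prime-product lengths, so that by the Chinese Remainder Theorem the exponent $k$ in $\pi^k$ encodes all residues $k \bmod p_i$ independently. The paper, however, takes a slightly different path. It first isolates an intermediate problem \textsc{Disjunctive Chinese Remainder} (given moduli $m_i$ in unary and forbidden sets $S_i \subseteq \Z_{m_i}$, find $t$ with $t \notin S_i \bmod m_i$ for all $i$), shows this \NP-hard by reducing from $3$-Colorability, and then reduces it to \globalmin{$1$} using one cycle per modulus, each carrying a single $1$ at label $0$, with the forbidden-label positions ordered first. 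Because the paper starts from $3$-Colorability, every residue is automatically a valid color (residues $0$, $1$, and $\geq 2$ mean red, green, blue), so no validity gadget is needed; you start from $3$-SAT and therefore add the extra variable cycles to penalize residues outside $\{0,1\}$. Your encoding also inverts the roles on those cycles (many $1$'s with a single front position, rather than a single $1$ with many front positions), but the lexicographic effect is identical. Both arguments are short; the paper's detour through \DCR\ yields a clean standalone \NP-hard problem of possible independent interest, while your reduction is a more compact single step. One small loose end: you implicitly assume each clause has three \emph{distinct} variables so that $L_j = p_a p_b p_c$ is squarefree and CRT applies cleanly; this is a harmless preprocessing assumption on $3$-SAT but should be stated.
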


We will define an intermediate \NP-complete problem called \textsc{Disjunctive Chinese Remainder}, 
short \DCR. For two numbers $t, m \in \N$ and a set $S \subseteq \N$, we write 
\begin{align*}
     t \not \in S \mod m 
\end{align*}
to state that $t \not \equiv s \mod m$ for all $s \in S$. Now in 
the \DCR\ decision problem, we are given moduli $m_1, \dots, m_l$ and sets of ``forbidden remainders''
$S_1, \dots, S_l$ with $S_i \subseteq \Z_{m_i} := \{0, 1, \dots, m_i - 1\}$. All numbers are given in unary and 
the moduli are not required to be pairwise co-prime. 
\DCR\  asks for a solution $t \in \N$ of  the system
\begin{align*}
   t & \not \in S_i \mod m_i \qquad \forall i = 1, \dots, l \ . 
\end{align*} 
This is clearly in \NP: if there is a solution $x \in \N$, then there is one with 
$0 \leq t \leq {\rm lcm}(m_1, m_2, \dots m_l)$ and thus $t$ has polynomially many bits in binary. 
Verifying that this $t$ is a solution can now be done by division with remainder.

\begin{lemma}
\label{DCR-is-hard}
\DCR\ is \NP-complete.
\end{lemma}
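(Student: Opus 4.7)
The plan is to reduce from $3$-\sat\ using a Chinese Remainder Theorem (CRT) encoding. I would encode each Boolean variable $x_j$ by the residue of the target integer $t$ modulo the $j$-th prime $p_j$; CRT then lifts per-variable assignments to a single $t \in \N$, and the DCR constraints play the role of the clauses.

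Concretely, given a $3$-\sat\ instance $\varphi$ with variables $x_1, \dots, x_n$ and clauses $C_1, \dots, C_k$, let $p_1 < p_2 < \dots < p_n$ be the first $n$ primes. By the prime number theorem, $p_n = O(n \log n)$, so primes and products of three of them are polynomially bounded in $n$. For each variable $x_j$ I would introduce a constraint $(m_j, S_j) := (p_j, \{2, 3, \dots, p_j - 1\})$; this forces $t \bmod p_j \in \{0,1\}$, which I interpret as a Boolean value $\alpha(x_j) := t \bmod p_j$. For each clause $C_i$ on the variables $x_{j_1}, x_{j_2}, x_{j_3}$, I would introduce a constraint $(m'_i, S'_i) := (p_{j_1} p_{j_2} p_{j_3}, \{r_i\})$, where $r_i$ is the unique residue modulo $m'_i$ with $r_i \equiv b_k \pmod{p_{j_k}}$ for each $k \in \{1,2,3\}$, and $b_k \in \{0,1\}$ is the value of $x_{j_k}$ that makes the $k$-th literal of $C_i$ false. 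Computing $r_i$ is a standard CRT construction.

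For correctness, if $\alpha$ satisfies $\varphi$, CRT yields $t \in \N$ with $t \equiv \alpha(x_j) \pmod{p_j}$ for all $j$. The variable constraints are then met, and for every clause at least one literal is true, so the triple $(t \bmod p_{j_1}, t \bmod p_{j_2}, t \bmod p_{j_3})$ differs from $(b_1, b_2, b_3)$, giving $t \bmod m'_i \neq r_i$. Conversely, any solution $t$ of the resulting \DCR-instance satisfies $t \bmod p_j \in \{0,1\}$, so $\alpha(x_j) := t \bmod p_j$ is a legitimate Boolean assignment, and $t \bmod m'_i \neq r_i$ then forces $\alpha$ to satisfy at least one literal of each $C_i$.

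The main obstacle, such as it is, will be checking polynomial size under the \emph{unary} encoding. The total size is dominated by $\sum_{j} p_j + \sum_i p_{j_1} p_{j_2} p_{j_3} = O(n^2 \log n + k n^3 \log^3 n)$, which is polynomial in $|\varphi|$. The sets $S_j$ are arithmetic intervals and each $S'_i$ is a singleton, so they contribute nothing beyond the moduli. Together with the \NP-membership argument already given above the lemma, this yields \NP-completeness.
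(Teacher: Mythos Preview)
Your proof is correct and follows the same CRT-based philosophy as the paper, but the source problem and the encoding details differ. The paper reduces from $3$-Colorability: it assigns to vertex $v_i$ the prime $p_i$ (starting from $p_1 = 3$) and interprets $t \bmod p_i \in \{0,1,\text{else}\}$ directly as a color. Because every residue is a legitimate color, no domain-restriction constraints are needed; instead, each edge $\{u,v\}$ contributes a constraint with modulus $p_u p_v$ and a (moderately large) forbidden set encoding all monochromatic pairs. Your reduction from $3$-\sat\ trades this off the other way: you add an explicit domain constraint $(p_j,\{2,\dots,p_j-1\})$ per variable to pin $t \bmod p_j$ down to $\{0,1\}$, but in exchange each clause constraint has modulus a product of \emph{three} primes and a forbidden set that is a singleton. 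Both routes give polynomial-size unary instances; yours has slightly larger moduli but smaller forbidden sets, and the correctness argument is arguably a touch cleaner since each clause is killed by exactly one forbidden residue. One minor caveat worth stating explicitly is that you are assuming the three variables in each clause are distinct (so that $p_{j_1},p_{j_2},p_{j_3}$ are coprime and CRT applies); this is a standard normalization of $3$-\sat.
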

\begin{proof}
 We reduce from $3$-Colorability. Given a graph $G = (V,E)$ with $|V| = n$, we 
 let $3 = p_1, p_2, \dots, p_n$ be the first $n$ prime numbers greater than 2. By the prime number theorem,
 $p_n$ is polynomial in $n$, thus the $p_i$ can be found in polynomial time by 
 a brute force search using naive prime number testing.\\
 
 We let $N = p_1 \cdot p_2 \cdot \dots \cdot p_n$ and define the following function 
 from $\Z_N$ to $3$-colorings of the vertices $V$: given a number $0 \leq t \leq N-1$, 
 the corresponding coloring $c_t : V \rightarrow \{r,g,b\}$ is defined by 
 \begin{align*}
    c_t(v_i) & = \begin{cases}
        r & \textnormal{ if $t \equiv 0 \mod p_i$} \\
        g & \textnormal{ if $t \equiv 1 \mod p_i$} \\
        b & \textnormal{ else.} 
    \end{cases}
 \end{align*}
 By the Chinese Remainder Theorem, this is a surjective function and thus every 
 3-coloring can be encoded by one single number $0 \leq x \leq N-1$. For an 
 edge $e = \{u,v\}$, we write a constraint that makes sure that $u$ and $v$ receive different 
 colors. For each pair $(a,b) \in \Z_{p_u} \times \Z_{p_v}$ with 
  $(a,b) = (0,0)$ or $(a,b)=(1,1)$ or $a \geq 2, b \geq 2$, we compute the
 unique number $c \in \Z_{p_uq_v}$ with $c \equiv a \mod p_u$ and $c \equiv b \mod q_v$. 
 Let $S_{e}$ be the set of all numbers $c$ thus constructed, set $m_e := p_u \cdot p_v$ and 
 write the constraint 
 \begin{align}
    x \not \in S_{e} \mod m_e \ .  \label{constraint-modulo-prime-product}
 \end{align}
 If $e$ is the \nth{$i$} edge of the graph, we set $m_i = p_u p_v$ and $S_i = S_{e}$.
 We see that $x$ satisfies (\ref{constraint-modulo-prime-product}) if and only if 
 $x$ encodes a properly colors edge $e$.
\end{proof}

\begin{lemma}
\label{DCR-to-Min}
\DCR\ reduces to \globalmin{$1$}.
\end{lemma}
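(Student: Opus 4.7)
The plan is a polynomial-time reduction from \DCR, which is \NP-complete by \cref{DCR-is-hard}. Given an instance with moduli $m_1,\dots,m_l$ and forbidden sets $S_i \subseteq \Z_{m_i}$, I will construct a single permutation $\pi$ on $n := \sum_{i=1}^l m_i$ points together with a bitstring $\x \in \{0,1\}^n$ so that the lexicographic minimum of $\x \circ \pi^t$ over $t$ encodes satisfiability of the \DCR\ instance. The guiding observation is that a power $\pi^t$ of a permutation whose disjoint cycles have lengths $m_1,\dots,m_l$ acts on the $i$-th cycle as a rotation by $t \bmod m_i$; iterating a single permutation therefore realises exactly the Chinese-Remainder-style simultaneous residues that appear in \DCR.

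Concretely, I reserve the indices $1,\dots,l$ as distinguished ``root'' positions $r_1,\dots,r_l$, one per cycle, and partition the remaining $n-l$ indices arbitrarily among cycles $C_1,\dots,C_l$ with $|C_i| = m_i$ and $r_i \in C_i$. Let $\pi$ be the product of these cyclic permutations, each cycling through $C_i$ starting from $r_i$, and write $p_{i,s} := \pi^s(r_i)$. Define $\x(p_{i,s}) = 1$ iff $s \in S_i$. The core computation is then
\begin{align*}
(\x \circ \pi^t)(r_i) \;=\; \x(\pi^t(r_i)) \;=\; \x(p_{i,\, t \bmod m_i}),
\end{align*}
which equals $0$ precisely when $t \not\in S_i \bmod m_i$. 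Hence the first $l$ bits of $\x \circ \pi^t$ are all zero if and only if $t$ satisfies every constraint of the \DCR\ instance.

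Because the first $l$ coordinates dominate the lexicographic order, the global minimum of $\x \circ \pi^t$ over $t$ begins with $0^l$ exactly when \DCR\ is satisfiable: if a solution $t^*$ exists, then $\x \circ \pi^{t^*}$ already witnesses a prefix of $0^l$, and conversely if no $t$ satisfies all constraints, then every element of the orbit carries a $1$ somewhere in the first $l$ positions, so the minimum cannot start with $0^l$. Given the permutation $\pi' \in \group{\pi}$ returned by an oracle for \globalmin{$1$}, I compute $\x \circ \pi'$ in polynomial time and answer ``yes'' iff its $l$-bit prefix is $0^l$; the whole construction is polynomial since the $m_i$ are given in unary, so $n$ is polynomial in the input size. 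I expect the main point requiring care is the book-keeping of cycles and positions needed to make the identity $\pi^t(r_i) = p_{i,\, t \bmod m_i}$ hold cleanly; once this is in place, correctness is immediate from the short-circuit nature of lexicographic comparison.
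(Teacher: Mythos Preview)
Your proof is correct and follows essentially the same approach as the paper: build a single permutation whose disjoint cycles have lengths $m_1,\dots,m_l$, so that $\pi^t$ simultaneously realises the residues $t \bmod m_i$, and choose a position ordering under which a certain all-zero prefix of $\x\circ\pi^t$ characterises satisfiability of the \DCR\ instance. The only cosmetic difference is a dual encoding---the paper places a single $1$ per cycle and puts all \emph{forbidden-remainder positions} at the front, whereas you place $1$'s at all forbidden-remainder positions and put a single \emph{root position} per cycle at the front---but both yield the same ``prefix is all $0$'s iff \DCR\ is satisfiable'' conclusion.
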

\begin{proof}
  Given an instance of \DCR\, we set $M = m_1 + m_2 + \cdots + m_l$ and 
  define a permutation $\pi$ on $[M]$ that has $l$ disjoint cycles, one of length $m_i$ for each $i$. 
  We label the elements of the \nth{$i$} cycle consecutively with the numbers $0,\dots,m_i-1$. 
  We define a bit string $\x \in \{0,1\}^M$ that has exactly one $1$ in each cycle, 
  placed on the element that has label $0$. The orbit element $\x \circ \pi^t$ still 
  has exactly one $1$ in cycle $i$, but now at the vertex labeled $t \textnormal{ mod } m_i$.

  For each cycle $i$, let $F_i$ be the elements on it whose labels are in the set $S_i$ of forbidden
  remainders and let $F := F_1 \cup \dots \cup F_l$. 
  We order the elements of $[M]$ such that the elements of $F$ come first.
  There exists a solution $t \in \N$ to the \DCR\ instance if and only if 
  the string $\x \circ \pi^t$ has no $1$ in any position in $F$. 
\end{proof}

\section{\PLS-Hardness}

We now state and prove our main result:
\begin{theorem}
  \localmin{$k$} is \PLS-complete.
  \label{theorem-PLS-complete}
\end{theorem}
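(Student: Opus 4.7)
The argument splits into membership in $\PLS$ and $\PLS$-hardness.

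For \emph{membership}, the standard $\PLS$ template fits directly: a feasible solution is an element $\pi\in\group{\pi_1,\dots,\pi_k}$, and group membership is checkable in polynomial time by \cite{DBLP:conf/focs/FurstHL80}; the neighborhood of $\pi$ is the $k$-element set $\{\pi\circ\pi_i\}$; and the cost of $\pi$ is the lex rank of $\x\circ\pi$. The identity permutation is a valid starting solution and each improving move strictly decreases lex cost, so the transition graph is a DAG whose sinks are the desired local minima.

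For \emph{hardness}, I would reduce from \FLIP, which the authors identify as essentially the only known $\PLS$-complete problem with a lexicographic cost. Given a \FLIP\ instance $(C,\x_0)$ with $C\colon\{0,1\}^n\to\{0,1\}^m$, I would build a polynomial-length bit string $\x$ and generators $\pi_1,\dots,\pi_k$ such that local minima of the permutation instance decode to local minima of $C$ under single-bit flips. The string would be partitioned into blocks: at the most significant end, an ``output'' block reflecting $C(\y)$ for the currently encoded state $\y$, so that lex order on $\x\circ\pi$ tracks lex order on $C$; next, a ``state'' block carrying the input bits $x_1,\dots,x_n$; then, for each gate $g$ of $C$, a \emph{truth-table gadget}---$2^{\mathrm{fanin}(g)}$ constant bits holding $g$'s truth table together with a single movable ``indicator'' bit pointing to the row corresponding to the current values of $g$'s inputs; and finally a low-order counter block to drive propagation. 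Each flip generator $\pi_i^{\mathrm{flip}}$ would toggle the two positions storing $x_i$ and simultaneously shift the indicator bit in every gadget whose gate uses $x_i$ as an input; additional ``propagation'' generators would shift indicators in deeper gates and interact with the counter so that an inconsistent indicator always admits a strictly cost-decreasing repair while a consistent one does not.

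The \emph{main obstacle} is that permutations can only move existing bits, never compute new ones, so the simulation of $C$ must be implemented entirely by bit-moving. The truth-table gadgets side-step this by encoding each gate's full truth table in the string as fixed constants, so that ``gate evaluation'' becomes ``pointer movement.'' The delicate pieces of the argument are then (i) showing that the resulting generators produce a subgroup of $S_N$ whose orbit is rich enough to represent all \FLIP\ configurations reachable from $\x_0$, (ii) verifying that the lexicographic cost ordering forces all pending propagations to finish before another flip can be made, and (iii) checking that at any local minimum $\pi^\star$ of the permutation instance the state block of $\x\circ\pi^\star$ encodes some $\y$, the output block equals $C(\y)$, and the fact that no flip generator improves $\x\circ\pi^\star$ translates directly to $C(\y)\preceq_{\mathrm{lex}} C(\y\oplus e_i)$ for every $i$. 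Reading $\y$ off the state block then yields the desired $\PLS$-reduction.
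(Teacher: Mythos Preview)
Your membership argument and the overall plan (reduce from \FLIP, encode gates as bit-gadgets so that ``evaluation'' becomes bit-moving, use the two-positions-per-bit trick so a transposition can simulate a flip) match the paper. The gap is in step~(iii), and it is the heart of the matter.

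With a single copy of $C$, applying your generator $\pi_i^{\mathrm{flip}}$ changes the state bit $x_i$ and the first-layer indicators, but it does \emph{not} touch the output block---that block still shows $C(\y)$, not $C(\y\oplus e_i)$. Hence whether $\pi_i^{\mathrm{flip}}$ is lex-improving is decided by local changes in the state/gadget region, which bears no relation to whether $C(\y\oplus e_i)\prec_{\rm lex} C(\y)$. Local optimality in \PLS\ is a statement about \emph{single} generator applications, so the fact that a sequence ``flip then fully propagate'' would eventually reveal $C(\y\oplus e_i)$ is irrelevant; your ``counter block'' cannot force a sequence to be atomic. Concretely, at a fully consistent configuration, $\pi_i^{\mathrm{flip}}$ may well be improving (e.g.\ it sets $x_i$ from $1$ to $0$ while leaving the more significant output block unchanged) even when $C(\y\oplus e_i)\succ_{\rm lex} C(\y)$, and conversely. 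So the translation you assert in~(iii) does not hold with the gadgetry you describe.

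The paper closes exactly this gap with the Krentel ``test circuits'' trick: it maintains $n{+}1$ copies $C_0,\dots,C_n$ of the circuit, with $C_0$ evaluated on $\x$ and $C_j$ on $\x\oplus e_j$. The ``flip'' generator $\sigma_j$ is not a local toggle but a wholesale swap of $C_0$ with $C_j$ (plus flipping the $j$-th input bit in every other copy). Thus, at a consistent configuration, applying $\sigma_j$ instantly replaces $C_0$'s output block by $C(\x\oplus e_j)$, so ``$\sigma_j$ is not improving'' is literally ``$C(\x\oplus e_j)\not\prec_{\rm lex} C(\x)$''. The gate gadget is also tailored to this: a $2\times 2$ block per NAND gate with the property that the gate is correctly evaluated iff the single position $g_{11}$ reads $0$; these control bits are placed above the output bits in the ordering, which forces every local optimum to have all copies correctly evaluated before any $\sigma_j$ comparison matters. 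Your truth-table-with-indicator idea is in the right spirit for ensuring ``consistent is locally stable, inconsistent is locally improvable,'' but without the precomputed neighbor copies it cannot deliver~(iii).
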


In order to view it as a \PLS-problem we have an instance $I$ consisting of the permutations $\pi_1, .., \pi_k$ and the string $s \in \{0,1\}^N$. Solutions are permutations from $\langle\pi_1, \dots \pi_k\rangle$ as we can efficiently recognize whether permutations are in the group generated by $\pi_1, .., \pi_k$. 
The start solution is the identity. Neighbors of $\pi$ are permutations $\pi \circ \pi_i$ for $i \in \{1, \dots k\}$. The cost of a permutation $\pi$ is $\sum_{i=1}^{N} s(\pi(i)) 2^{N-i}$, where $s(\pi(i))$ is the digit of the position that $i$ is mapped to by $\pi$. All these can be computed in polynomial time, hence the problem is in \PLS.

A solution $\pi$ is cheaper than a solution $\sigma$ for a string $s$ if there is an integer $i$ such that for all $j < i$ we have that $s(\sigma(j)) = s (\pi(j))$ and $s(\pi(i)) < s(\sigma(i))$ as the cost is a geometric sum.

We can turn the minimization problem into a maximization problem by inverting the string.

\subsection{High-level idea}
We reduce from the \PLS-complete problem \FLIP. This problem is especially suitable since it uses a lexicographic cost function, too. Formally \FLIP\ is defined as follows:

\begin{definition}
    An instance of \FLIP\ consists of a circuit $C$ with $n$ inputs and $m$ outputs. Feasible solutions are all input assignments, i.e., the set $\{0,1\}^n$. The cost of a solution $\x$ is the output of $C(\x) \in \{0,1\}^m$. Two solutions are neighbors if they differ in a single bit.
    The cost function is defined by reading the output 
    as a number in binary; in other words, by 
    the lexicographic order on $\{0,1\}^m$. We are asked to find a solution whose cost is minimal among all its neighbors.
\end{definition}

We use an idea by Krentel\cite{Krentel89} and have $n+1$ copies $C_0, C_1, \dots, C_n$ of the circuit $C$, where $C_0$ 
is fed as an input $\x$ and 
$C_i$ is fed as an input $\x \oplus \e_i$, i.e., $\x$
with the $i$-th bit flipped. The setup is depicted in \cref{fig:highlevel}. 

The permutation group consists of two types of permutations. The first kind  $\pi_i^j$ simulates flipping the output of gate $i$ in circuit $j$; we allow for gates and hence circuits to be temporarily evaluated incorrectly. The input and output of a gate are not saved anywhere but syntactically built into the permutations and string. An exception is the input and output of the circuit, which we save for later usage. We use some positions as very important control bits to ensure that the correct evaluation is always possible.

The second type of permutation $\sigma_j$ swaps the circuit $0$ with the circuit $j$ and flips the $j$-th input bit for all 
other circuits. This simulates a step in the FLIP-problem. 

\begin{figure}[h]
    \centering
    \begin{tikzpicture}[
    box/.style={draw, thick, minimum width=5cm, minimum height=1cm},
    smallbox/.style={draw, thick, minimum width=0.6cm, minimum height=0.6cm},
    bit/.style={draw, rectangle, minimum width=0.6cm, minimum height=0.6cm, thick},
    swapline/.style={thick, rounded corners, <->}
    ]

    \node[bit] (input1a) at (0, 0.3) {0};
    \node[bit] (input1b) at (0, -0.3) {1};
    \node[bit] (input1c) at (0, -0.9) {1};     
    \node[box] (C0) at (4, 0) {$C_0$};
    \node[bit] (output1a) at (8, 0.3) {0};
    \node[bit] (output1b) at (8, -0.3) {1};   
    \node[smallbox] (gi) at (5,0) {$g_i$};
    \node[smallbox] (gik) at (6,0) {$g_{i+k}$};
    \path[<->] (gi) edge [loop above] node {$\pi_i^0$} (gi);
    \path[->] (gik) edge [loop above] node {$\pi_i^0$} (gik);
    
    \node[bit] (input2a) at (0, -2) {1};
    \node[bit] (input2b) at (0, -2.6) {1};
    \node[bit] (input2c) at (0, -3.2) {1};  
    \node[box] (C1) at (4, -2.3) {$C_1$};
    \node[bit] (output2a) at (8, -2) {0};
    \node[bit] (output2b) at (8, -2.6) {0};       
    \node[bit] (input3a) at (0, -4) {0};
    \node[bit] (input3b) at (0, -4.6) {0};
    \node[bit] (input3c) at (0, -5.2) {1};  
    \node[box] (C2) at (4, -4.3) {$C_2$};
    \node[bit] (output3a) at (8, -4) {1};
    \node[bit] (output3b) at (8, -4.6) {1};   

    \node[bit] (input4a) at (0, -6) {0};
    \node[bit] (input4b) at (0, -6.6) {1};
    \node[bit] (input4c) at (0, -7.2) {0};  
    \node[box] (C3) at (4, -6.3) {$C_3$};
    \node[bit] (output3a) at (8, -6) {1};
    \node[bit] (output3b) at (8, -6.6) {0};

    \draw[swapline] (C1.west |- input1b.east) -- ++(-1, 0) -- ++(0, -1) -- node[midway, left] {$\sigma_1$}(C2.west |- input2a.east);

    \path[->] (input3a) edge [loop left] node {$\sigma_1$} (input3a);
    \path[->] (input4a) edge [loop left] node {$\sigma_1$} (input4a);

    \end{tikzpicture}
    \caption{A high level overview of the reduction}
    \label{fig:highlevel}
\end{figure}

\subsection{Definition of the reduction}

We assume without loss of generality that the circuit $C$ consists only of NAND-gates. Additionally, we assume that no input is directly passed to an output, i.e. on every path from an input to an output, there is at least one gate. 

We will now describe the set $V$ of {\em positions}, 
the permutations in $S_V$, and how strings in $\{0,1\}^V$, 
called {\em assignments}, 
correspond to the circuits $C_0,\dots, C_n$ computing 
their values on an input $\x$ and its neighbors 
$\x \oplus \e_i$. We face two main challenges:
\begin{enumerate}
 \item We  need an operation of the form
 ``flip bit $i$'', but permutations can only permute 
 positions, not flip bits. We solve this 
 by replacing position $i$ by two positions 
 $i_0, i_1$ and encoding $[y_i = 0]$ by 
 $[y_{i_0}=0, y_{i_1}=1]$ and $[y_i=1]$ by
 $[y_{i_0}=1, y_{i_1}=0]$.
 Flipping bit $i$ then corresponds to the 
 permutation that swaps $i_0$ and $i_1$, i.e., the 
 transposition $(i_0, i_1)$. 
 We call the one-position-per-bit view the 
 {\em condensed view} and the two-positions-per-bit view 
 the {\em expanded view}. We will give details 
 in the full version due to space restrictions. For now, we phrase things in the condensed view.
 \item Usually, when we flip an input $x_i$ to a circuit $C$, we 
 imagine the change propagating instantaneously through 
 the circuit $C$, potentially changing its output. 
 Here, we need to allow for a way for this change to 
 proceed gradually; therefore, we allow gates 
 to be temporarily in an incorrect state.
\end{enumerate}

When we have a position $i \in V$ and 
an assignment $\y \in \{0,1\}^V$ and $y_i = b$, 
we sometimes say {\em $i$ is assigned value $b$}
and sometimes {\em the label of $i$ is $b$}.\\ 

\textbf{State of a gate.} The {\em state of a gate}
is a triple $(x,y,b)$ where $x, y$ are the 
two input bits and $b$ is the output bit. 
If $b = \neg (x \wedge y)$ we call $(x,y,b)$ {\em correct}, 
because $b$ is what it is supposed to be: the NAND of 
$x$ and $y$; otherwise, we call it {\em incorrect}.
A gate $g$ is represented by a gadget of
four positions $g_{00}, g_{01}, g_{10}, g_{11}$, 
ordered as a $2\times 2$ square,
as shown in Figure~\ref{fig:enter-label}. 
A state $(x, y, b)$ is encoded as follows: 
position $g_{xy}$ is labeled $b$, the three
remaining ones are labeled with $\neg b$. 
Note that not all labelings of the gadget correspond 
to a gate state, only those where the number of positions 
labeled $1$s is one or three.
\begin{observation}
   The triple $(x,y,b)$ is correct if and only 
   if the position $g_{11}$ in its representation 
   is labeled $0$.
   \label{observation-control-bit}
\end{observation}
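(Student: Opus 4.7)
The plan is to do a straightforward case analysis on whether $(x,y) = (1,1)$ and read off the label of $g_{11}$ in each case. By the encoding rule stated just above the observation, position $g_{xy}$ carries label $b$ and the other three positions of the gadget carry label $\neg b$. Hence the label of $g_{11}$ is $b$ exactly when $(x,y) = (1,1)$, and is $\neg b$ otherwise. On the other hand, correctness of the triple means $b = \neg(x \wedge y)$, which forces $b = 0$ in the case $(x,y) = (1,1)$ and $b = 1$ in every other case.

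Combining these two observations gives the biconditional uniformly. In the case $(x,y) = (1,1)$, the label of $g_{11}$ equals $b$ and correctness requires $b = 0$, so ``$g_{11}$ labeled $0$'' is the same statement as ``$(x,y,b)$ correct.'' In the case $(x,y) \neq (1,1)$, the label of $g_{11}$ equals $\neg b$ and correctness requires $b = 1$, i.e.\ $\neg b = 0$, so again ``$g_{11}$ labeled $0$'' matches ``$(x,y,b)$ correct.'' Both directions of the equivalence fall out together without any further argument.

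I do not expect any real obstacle — the observation is essentially true by design of the gadget, and the only thing worth sanity-checking is that the encoding is well-defined, which is immediate since the four positions $g_{00}, g_{01}, g_{10}, g_{11}$ are distinct and $g_{xy}$ is always exactly one of them. The reason it is singled out as an observation is presumably that $g_{11}$ will later play the role of a local ``control bit'' or ``correctness indicator'' that the permutations $\pi_i^j$ and the cost function can access directly, so it is worth making the equivalence explicit before the rest of the reduction is built on top of it.
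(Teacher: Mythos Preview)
Your proof is correct and complete. The paper itself does not supply a proof for this observation at all---it is stated as self-evident from the encoding rule and immediately followed by the remark that ``this is the core reason why we use this representation.'' Your two-case analysis on whether $(x,y)=(1,1)$ is exactly the intended reading, and your closing paragraph correctly anticipates how $g_{11}$ is used downstream as the control bit in the cost function.
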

This is the core reason why we use this representation: 
we can determine correctness of a gate by reading 
just one bit. This will be important later when 
defining a cost function: having a $1$ at those 
{\em control positions} is bad. \\

\textbf{Input and output variables.}
Let $x_i^{(j)}$ be the $i$-th input variable 
to the $j$-th circuit (so $1 \leq i \leq n$ and 
$0 \leq j \leq n$). We introduce one position for 
each $x_i^{(j)}$. Similarly, let $c_k^{(j)}$ be the 
$k$-th output value of the $j$-th circuit; we introduce 
one position for each $c_k^{(j)}$. 

A central definition is that of a well-behaved 
assignment. Basically, it formalizes when an assignment 
encodes the partial evaluation of the inputs by the 
$n+1$ circuits. 

\begin{definition}
\label{definition-well-behaved}
 An assignment $\y \in \{0,1\}^V$ is called 
 {\em well-behaved} if the following hold:
 \begin{itemize}
 \item For each gate $g$, the four positions 
 $g_{00}, g_{01}, g_{10}, g_{11}$ are the 
 encoding of a gate state $(a_1, a_2, b) \in \{0,1\}^3$. 
 \item If the output of a gate $g$ is the $l$-th 
 input of some gate $h$, then the corresponding input 
 and output values agree, 
 i.e., if the state of $h$ is $(a_1', a_2', b')$ then 
 $b = a_l'$.
 \item If the $l$-th input of $g$ is an input variable 
 $x_i^{(j)}$, then $x_i^{(j)}$ is assigned the value
 $a_l$. 
 \item If $g$ is the $k$-th output gate of circuit 
 $j$, then its output value $b$ is the same as  the label 
 of $c_k^{(j)}$.
 \item The labels of the input values $x_{i}^{(j)}$ equal 
 $x_i^{(0)}$ if $i \ne j$, and are unequal if $i = j$; 
 in words, if the labels of the input positions of $C_0$
 form a vector $\x \in \{0,1\}^n$, then those of $C_j$ form 
 the vector $\x \oplus e_j$. 
 \end{itemize}
\end{definition}

Next, we describe the permutations on the positions. 
They come in two types: (1) flipping a gate and 
(2) swapping two circuits. \\

\textbf{Flipping a gate.} If $g$ is in state 
$(a_1, a_2, b)$, then flipping $g$ means replacing 
$b$ by $\neg b$, and for each gate $h$ (in state 
$(a'_1, a'_2, b')$) into which 
$g$ feeds as $l$-th input, flipping $a'_l$. 
Since our permutations do not work on the {\em state} 
of a gate but on its representation in the 
four-position gadget (Figure~\ref{fig:enter-label}), 
we work as follows: we flip the labels in all positions 
of $g$, i.e., 
$g_{00}, g_{01}, g_{10}, g_{11}$; if $g$ is the first 
input of $h$, we swap positions of $h$ horizontally: 
swap $h_{00}$ with $h_{10}$ and  $h_{01}$ with $h_{11}$; 
if $g$ is the second input of $h$, we perform 
a vertical swap: $h_{00}$ with $h_{01}$ and 
$h_{10}$ with $h_{11}$. This operation 
changes the state of $g$ from incorrect to correct
(or vice versa) and may also change correctness 
of $h$. If $g$ happens to be the $k$-th
output bit of circuit $j$, then this operation 
also flips position $c_k^{(j)}$.
See Figure~\ref{fig:circuit-reduction} for 
an example of two consecutive gates being flipped.
We call this permutation $\pi_g$. If $g$ is 
the $i$-th gate in circuit $j$, we may also call it 
$\pi_i^j$.
Note that 
if $\y$ is a well-behaved then $\y \circ \pi_g$ 
is well-behaved, too.\\

\textbf{Swapping two circuits.} We want a permutation
that simulates flipping the $i$-th input bit to 
$C$, the circuit in the instance of \FLIP. 
We achieve this by swapping $C_0$ 
with $C_i$---that is, swapping every position (input 
values, values in gate gadgets, output values) in 
$C_0$ with its corresponding position in $C_i$, 
and simultaneously flipping the $i$-th input 
bit $x_i^{(j)}$ for every circuit $C_j$ with $j \in \{1,\dots,n\} \setminus \{i\}$; naturally, 
if this $x_i^{(j)}$ is the first input to a gate 
$g$ in $C_j$, we have to perform the ``horizontal swap''
at $g$ 
outlined above, and if it is the second input to $g$,
a ``vertical swap'' at $g$. 
We call this permutation $\sigma_i$. 
Again, if $\y$ is well-behaved then $\y \circ \sigma_i$ 
is well-behaved, too. \\

\textbf{The starting string $\y_{\rm start} \in \{0,1\}^V$.}
This is the assignment where $C_0$ has input 
$\mathbf{0} \in \{0,1\}^n$ and $C_i$ has input 
$\e_i \in \{0,1\}^n$ and all gates have output $0$ (whether
correctly or incorrectly). This is certainly well-behaved 
(or rather, can be made well-behaved by making sure 
that input to gate $h$ matches output of gate $g$ 
should they be connected).\\ 

We now have a set $V$ of positions, an assignment 
$\y_{\rm start} \in \{0,1\}^V$, and  a
set of permutations--gate-flippers
$\pi_g$ for each gate $g$ 
and circuit-swappers $\sigma_i$ for each $i \in [n]$. 
They generate a subgroup $G$ of $S_V$. It is 
clear that the orbit of $\y_{\rm start}$ under $G$
is the set of well-behaved assignments.

\subsection{The cost function}

We have promised to use a lexicographic ordering 
as a cost function. That is, if $\y, \y' \in \{0,1\}^V$ are 
two assignments, then $\y$ is better than $\y'$ (meaning 
lower cost) if $\y \prec_{\rm lex} \y'$. Thus, 
to define the cost function it suffices to specify 
an ordering on the positions in $V$. 
\begin{itemize}
\item Positions in $C_0$ come before positions in $C_1$ 
and so on.
\item Within a circuit, most important are 
the control positions $g_{11}$ of the gates, followed
by the output gates, followed by all remaining positions.
\item Within control positions in the same circuit, 
the order follows the topological ordering of the circuit, i.e., if $g$ feeds into $h$, then $g$'s control position
comes before $h$'s.
\end{itemize}

\begin{figure}
    \centering
    \begin{tikzpicture}
    \draw[thick] (0,0) rectangle (4,4);
    
    \draw[thick] (0,2) -- (4,2);
    \draw[thick] (2,0) -- (2,4);

    \node[align=left] at (1,3) {$g_{00}$ : 0\\ \textcolor{lightgray}{$g_{00}$ : 01}};
    \node[align=left] at (3,3) {$g_{01}$ : 1\\ \textcolor{lightgray}{$g_{01}$ : 10}};
    \node[align=left] at (1,1) {$g_{10}$ : 1\\ \textcolor{lightgray}{$g_{10}$ : 10}};
    \node[align=left] at (3,1) {$g_{11}$ : 1\\ \textcolor{lightgray}{$g_{11}$ : 10}};
    \node at (-0.5,3) {$y=0$};
    \node at (-0.5,1) {$y=1$};
    \node at (1, 4.5) {$x=0$};
    \node at (3, 4.5) {$x=1$};
    \end{tikzpicture}
    \caption{An example gate configuration with the input $x=0$ and $y=0$ and the output $0$. This is incorrect for a NAND-gate as indicated by the control bit. In light grey we denote the expanded encoding of the positions}
    \label{fig:enter-label}
\end{figure}

\begin{figure}[h!]
    \centering

    \begin{subfigure}[t]{\textwidth}
        \centering
        \begin{minipage}[t]{0.35\textwidth}
            \centering
            \begin{tikzpicture}
                \ctikzset{
                    logic ports=european,
                    logic ports/scale=0.8,
                    tripoles/european not symbol=ieee circle,
                }
                 
                \node (x) at (-2, -0.18) {x};
                \node (y) at (-2, 0.185) {y};
                \node (z) at (-2, 1) {z};

                \node[nand port, red] (g1) at (0,0){};
                \node[nand port] (g2) at (2,0.5){};
                \node (output) at (3, 0.5) {};
                \node[above = 0 cm of g1] (g1text) {$g_1$};
                \node[above = 0 cm of g2] (g2text) {$g_2$};

                \draw (x) -- node [midway, below] {0} (g1.in 2);
                \draw (y) -- node [midway, above] {1} (g1.in 1);
                \draw (z) -| node [pos =0.2, above] {1} (g2.in 1);
                \draw (g1.out) -| node [pos=0.2, above] {\textit{0}} (g2.in 2) ;
                \draw (g2.out) -- node[pos=0.2, above] {\textit{1}} (output);
            \end{tikzpicture}
        \end{minipage}%
        \hfill
        \begin{minipage}[t]{0.55\textwidth}
            \centering
            \begin{tikzpicture}
                \draw[thick] (0,0) rectangle (.5,1.5);
                \draw[thick] (0,.5) -- (.5,.5);
                \draw[thick] (0,1) -- (.5,1);
                \node (x) at (.25, .25) {x:0};
                \node (y) at (.25, .75) {y:1};
                \node (z) at (.25, 1.25) {z:1};
                \node[above = of x] (ni) {Input};

                \node (ng1) at (1.5,1.25) {$g_1$};
                \draw[thick] (1,0) rectangle (2,1);
                \draw[thick] (1,.5) -- (2,.5);
                \draw[thick] (1.5,0) -- (1.5,1);
                \node (g10) at (1.25, .25) {\textit{0}};
                \node (g00) at (1.25, .75) {1};
                \node[red] (g11) at (1.75, .25) {1};
                \node (g01) at (1.75, .75) {1};
                
                \node (ng2) at (3.5,1.25) {$g_2$};
                \draw[thick] (3,0) rectangle (4,1);
                \draw[thick] (3,.5) -- (4,.5);
                \draw[thick] (3.5,0) -- (3.5,1);
                \node (g10) at (3.25, .25) {\textit{1}};
                \node (g00) at (3.25, .75) {0};
                \node (g11) at (3.75, .25) {0};
                \node (g01) at (3.75, .75) {0};

                \node[draw] (out) at (5, 0.25) {1};
                \node[above = 0cm of out] (no) {Output};

            \end{tikzpicture}
        \end{minipage}
        \subcaption{Step 1: Initial state. Gate $g_1$ is 
          evaluated incorrectly, $g_2$ correctly. Still, 
          we call this a {\em well-behaved} state.}
    \end{subfigure}

    \vspace{0.5cm} 

    \begin{subfigure}[t]{\textwidth}
        \centering
        \begin{minipage}[t]{0.35\textwidth}
            \centering
            \begin{tikzpicture}
                \ctikzset{
                    logic ports=european,
                    logic ports/scale=0.8,
                    tripoles/european not symbol=ieee circle,
                }
                 
                \node (x) at (-2, -0.18) {x};
                \node (y) at (-2, 0.185) {y};
                \node (z) at (-2, 1) {z};

                \node[nand port, red] (g1) at (0,0){};
                \node[nand port, red] (g2) at (2,0.5){};
                \node (output) at (3, 0.5) {};
                \node[above = 0 cm of g1] (g1text) {$g_1$};
                \node[above = 0 cm of g2] (g2text) {$g_2$};

                \draw (x) -- node [midway, below] {0} (g1.in 2);
                \draw (y) -- node [midway, above] {1} (g1.in 1);
                \draw (z) -| node [pos =0.2, above] {1} (g2.in 1);
                \draw (g1.out) -| node [pos=0.2, above] {\textit{0}} (g2.in 2) ;
                \draw (g2.out) -- node[pos=0.2, above] {\textit{0}} (output);
            \end{tikzpicture}
        \end{minipage}%
        \hfill
        \begin{minipage}[t]{0.55\textwidth}
            \centering
            \begin{tikzpicture}
                \draw[thick] (0,0) rectangle (.5,1.5);
                \draw[thick] (0,.5) -- (.5,.5);
                \draw[thick] (0,1) -- (.5,1);
                \node (x) at (.25, .25) {x:0};
                \node (y) at (.25, .75) {y:1};
                \node (z) at (.25, 1.25) {z:1};
                \node[above = of x] (ni) {Input};

                \node (ng1) at (1.5,1.25) {$g_1$};
                \draw[thick] (1,0) rectangle (2,1);
                \draw[thick] (1,.5) -- (2,.5);
                \draw[thick] (1.5,0) -- (1.5,1);
                \node (g10) at (1.25, .25) {\textit{0}};
                \node (g00) at (1.25, .75) {1};
                \node[red] (g11) at (1.75, .25) {1};
                \node (g01) at (1.75, .75) {1};
                
                \node (ng2) at (3.5,1.25) {$g_2$};
                \draw[thick] (3,0) rectangle (4,1);
                \draw[thick] (3,.5) -- (4,.5);
                \draw[thick] (3.5,0) -- (3.5,1);
                \node (g10) at (3.25, .25) {\textit{0}};
                \node (g00) at (3.25, .75) {1};
                \node (g11) at (3.75, .25) {\textcolor{red}{1}};
                \node (g01) at (3.75, .75) {1};

                \node[draw] (out) at (5, 0.25) {0};
                \node[above = 0cm of out] (no) {Output};

            \end{tikzpicture}
        \end{minipage}
        \subcaption{Step 2: Application of the permutation $\pi_{g_2}$ which leads to $g_2$ being now incorrect as well. This is not a local improvement step. While it minimizes the output, the more expensive control bit of $g_2$ is now active. Still the relationship between the gate state and the string under the current permutation holds, so 
        this is well-behaved as well.}

    \end{subfigure}
    \vspace{0.5cm}

    \begin{subfigure}[t]{\textwidth}
        \centering
        \begin{minipage}[t]{0.35\textwidth}
            \centering
            \begin{tikzpicture}
                \ctikzset{
                    logic ports=european,
                    logic ports/scale=0.8,
                    tripoles/european not symbol=ieee circle,
                }
                 
                \node (x) at (-2, -0.18) {x};
                \node (y) at (-2, 0.185) {y};
                \node (z) at (-2, 1) {z};

                \node[nand port] (g1) at (0,0){};
                \node[nand port] (g2) at (2,0.5){};
                \node (output) at (3, 0.5) {};
                \node[above = 0 cm of g1] (g1text) {$g_1$};
                \node[above = 0 cm of g2] (g2text) {$g_2$};

                \draw (x) -- node [midway, below] {0} (g1.in 2);
                \draw (y) -- node [midway, above] {1} (g1.in 1);
                \draw (z) -| node [pos =0.2, above] {1} (g2.in 1);
                \draw (g1.out) -| node [pos=0.2, above] {\textit{1}} (g2.in 2) ;
                \draw (g2.out) -- node[pos=0.2, above] {\textit{0}} (output);
            \end{tikzpicture}
        \end{minipage}%
        \hfill
        \begin{minipage}[t]{0.55\textwidth}
            \centering
            \begin{tikzpicture}
                \draw[thick] (0,0) rectangle (.5,1.5);
                \draw[thick] (0,.5) -- (.5,.5);
                \draw[thick] (0,1) -- (.5,1);
                \node (x) at (.25, .25) {x:0};
                \node (y) at (.25, .75) {y:1};
                \node (z) at (.25, 1.25) {z:1};
                \node[above = of x] (ni) {Input};

                \node (ng1) at (1.5,1.25) {$g_1$};
                \draw[thick] (1,0) rectangle (2,1);
                \draw[thick] (1,.5) -- (2,.5);
                \draw[thick] (1.5,0) -- (1.5,1);
                \node (g10) at (1.25, .25) {\textit{1}};
                \node (g00) at (1.25, .75) {0};
                \node (g11) at (1.75, .25) {0};
                \node (g01) at (1.75, .75) {0};
                
                \node (ng2) at (3.5,1.25) {$g_2$};
                \draw[thick] (3,0) rectangle (4,1);
                \draw[thick] (3,.5) -- (4,.5);
                \draw[thick] (3.5,0) -- (3.5,1);
                \node (g10) at (3.25, .25) {1};
                \node (g00) at (3.25, .75) {1};
                \node (g11) at (3.75, .25) {\textit{0}};
                \node (g01) at (3.75, .75) {1};

                \node[draw] (out) at (5, 0.25) {0};
                \node[above = 0cm of out] (no) {Output};

            \end{tikzpicture}
        \end{minipage}
        \subcaption{Step 3: Application of the permutation $\pi_{g_1}$. Due to the inversion, $g_1$ is now correct and the change of its output is reflected in $g_2$ which is now also correct since the one was swapped out of the control position. This is now a local optima.}

    \end{subfigure}
    \vspace{0.5cm}

    \caption{Step-by-step evaluation of the circuit and reduction process. Each step shows the circuit state (left) and the reduction state (right). Currently incorrect gates are marked red and the output of a gate is identifiable via italics.}
    \label{fig:circuit-reduction}
\end{figure}

\subsection{Proof of Correctness}

In this section we now prove the correctness and tightness of our reduction. In order to show the correctness we have to show that if $\pi \in G$ is a local optimum 
of the \localmin{$k$} instance, i.e., if 
the assignment $\y := \y_{\rm start} \circ \pi$ cannot 
be further improved by applying a $\pi_g$ or a $\sigma_j$, 
then the input to $C_0$ encoded 
in $\y$  corresponds to a local optimum of 
\FLIP.

Well-behaved permutations for a string $s$ and a circuit $C$ are interesting, because they realize the evaluation of the circuit somehow. The only problem is that the gate does not have the correct output in the current permutation with the string $s$ compared to $C$.

\begin{lemma}
    Let $\y$ be a well-behaved assignment and 
    suppose that $y(g_{11})=0$ for every gate---every 
    control position is labeled $0$. Then all output positions are mapped to correct results according to $C$.
    That is, if $\x \in \{0,1\}^n$ is the vector 
    to which the positions of 
    circuit $C_0$ are mapped under $\y$, then 
    \begin{enumerate}
    \item the input positions of $C_i$ are mapped to 
    $\x \oplus \e_i$ (this actually  holds for every 
    well-behaved $\y$, control positions being $0$ or $1$), 
    \item the output positions of $C_0$
    are mapped to $C(\x)$;
    \item the output positions of $C_j$ are mapped
     to $C(\x \oplus \e_j)$
    \end{enumerate}
\end{lemma}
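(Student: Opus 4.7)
The plan is to derive each item by unpacking \cref{definition-well-behaved} together with \cref{observation-control-bit}, using a single layer of induction along the topological structure of each circuit.

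For item (1), no hypothesis on the control positions is needed. The last clause of \cref{definition-well-behaved} says that the input position $x_i^{(j)}$ carries the same label as $x_i^{(0)}$ when $i \ne j$ and the opposite label when $i = j$. Hence if the inputs of $C_0$ spell out $\x \in \{0,1\}^n$ then the inputs of $C_j$ spell out exactly $\x \oplus \e_j$.

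For items (2) and (3), I would fix a circuit $C_j$ and prove by induction along a topological ordering of its gates the following claim: for every gate $g$ of $C_j$, if the state encoded by $\y$ at $g$ is $(a_1,a_2,b)$, then $a_1$ and $a_2$ are the two values that $C$ would actually feed into $g$ when evaluated on input $\x \oplus \e_j$, and $b$ is the value $C$ computes at $g$ on that input. The base case handles a gate both of whose inputs are input variables of the circuit: by item (1) and the third clause of \cref{definition-well-behaved}, $(a_1,a_2)$ agrees with the appropriate bits of $\x \oplus \e_j$. For the inductive step, if an input of $g$ is wired to the output of a predecessor gate $h$, the second clause of \cref{definition-well-behaved} forces the corresponding $a_l$ to equal the output bit of $h$, which by the induction hypothesis is the value $C$ assigns to $h$ on input $\x \oplus \e_j$. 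In both cases, \cref{observation-control-bit} combined with the hypothesis $y(g_{11}) = 0$ tells us the state $(a_1,a_2,b)$ is correct, so $b = \neg(a_1 \wedge a_2)$, which is precisely the value $C$ assigns to $g$ on $\x \oplus \e_j$.

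Items (2) and (3) then fall out of the fourth clause of \cref{definition-well-behaved}, which identifies the label of each output position $c_k^{(j)}$ with the output bit of the $k$-th output gate of $C_j$. Applying the induction conclusion to those output gates yields the claim. I do not expect a genuine obstacle here: the careful design of \cref{definition-well-behaved} and of the gadget encoding in \cref{observation-control-bit} has already done all the heavy lifting. The only subtle point is to fix a topological ordering of the gates up front, so that when a gate is processed inductively its predecessors have already been handled, and this is legitimate because we assumed the circuit contains no input-to-output wires, so every input of every gate is either a circuit input or the output of an earlier gate.
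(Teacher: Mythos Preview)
Your proposal is correct and follows essentially the same approach as the paper: the paper's own proof is the single sentence ``This follows from Observation~\ref{observation-control-bit} and induction over the sequence of gates,'' and your write-up is a careful unpacking of exactly that induction, invoking the clauses of Definition~\ref{definition-well-behaved} at the right places.
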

\begin{proof}
    This follows from Observation~\ref{observation-control-bit}
    and induction over the sequence of gates. 
\end{proof}

The previous lemma tells us that all control positions should be mapped to zero in any local optimum so that all output positions are mapped to the correct output of $C$ given the input. We show that we can always apply a permutation to achieve this.

\begin{lemma}
    Let $\pi$ be a permutation from $G$. We consider a gate $g_i$ in the circuit $j$. The control position of $g_i$ is mapped to position of the form $g_{i,k}^j$ by $\pi$ 
\end{lemma}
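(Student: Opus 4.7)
The plan is to prove this by induction on the length of a word representing $\pi$ as a product of the generators $\pi_g$ and $\sigma_m$. I would establish the stronger invariant that for every $\pi \in G$ and every gate index $i$, the set
\[
\Gamma_i \;=\; \{\, g_{i,k}^{j'} : k \in \{00,01,10,11\},\ 0 \le j' \le n\,\}
\]
of all four gadget positions of the $i$-th gate, across all $n+1$ circuits, is mapped to itself by $\pi$. Since the control position of $g_i$ in circuit $j$ is $g_{i,11}^j \in \Gamma_i$, its image under $\pi$ then automatically has the form $g_{i,k}^{j'}$, as claimed.

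For the induction step it is enough to check each generator individually. A gate-flipping generator $\pi_h$ acts, by construction, only inside gadgets: it permutes the four positions of $h$'s own gadget among themselves, and for each gate $h'$ into which $h$ feeds it performs a horizontal or vertical swap inside $h'$'s gadget; every such move stays within a single gadget, so $\Gamma_i$ is invariant for every $i$. A circuit-swap $\sigma_m$ does two things: first, every position in $C_0$ is swapped with its counterpart in $C_m$, which has the same gate index and the same local index but a circuit index flipped between $0$ and $m$; second, for each $j' \notin \{0,m\}$, horizontal or vertical swaps are applied inside exactly those gadgets of $C_{j'}$ that consume the input $x_m^{(j')}$, and these are again internal to a single gadget. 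In every case, $\Gamma_i$ is preserved.

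Because composition preserves the invariant, every $\pi \in G$ maps $\Gamma_i$ into itself, which proves the lemma. The main obstacle, modest as it is, is bookkeeping: one must argue that the pairing implicit in $\sigma_m$ really pairs $g_{i,k}^{0}$ with $g_{i,k}^{m}$, i.e.\ that the gate index $i$ is preserved across circuit copies. This is safe because $C_0,\dots,C_n$ are defined as structurally identical copies of the same circuit $C$, differing only in the bits assigned to the input positions, so gate indices match across copies and there is no ambiguity in the correspondence used by $\sigma_m$. Once this verification is made, the induction is entirely routine and the claim follows.
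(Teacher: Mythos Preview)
Your argument is correct and follows the same route as the paper: induction over a word in the generators, checking that each $\pi_g$ and each $\sigma_m$ preserves the relevant gadget set. You are in fact slightly more careful than the paper's sketch---you explicitly treat the horizontal/vertical swaps at successor gates---and your conclusion that the image is of the form $g_{i,k}^{j'}$ (circuit index possibly changed) is exactly what the paper's own proof establishes, since $\sigma_m$ may swap the gate into another circuit copy.
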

\begin{proof}
This can be proven by induction on the structure of $\pi$ in the permutation group. Any generator preserves this property as it either does not affect this position at all $\pi^{j'}_{i'}$ for $i\ne i'$ or $j \ne j'$. If both $i'$ and $j'$ are equal to $i$ and $j$, then the position is simply inverted which preserves this property. Additionally, any permutation $\sigma_j$ either maps a gate to itself or swaps the gate, so that the claim holds here as well.
\end{proof}

\begin{lemma}
    If $\y$ is well-behaved and some 
    control position $g_{11}$ is $1$ under $\y$, then 
    $\y$ is not a local optimum.     
\end{lemma}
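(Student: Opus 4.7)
The plan is to exhibit a single generator whose application strictly decreases the lexicographic cost of $\y$. First I would single out the control position $g^{\star}_{11}$ that comes earliest in the cost ordering among those labeled $1$ under $\y$: take $j$ to be the smallest index such that some control position in $C_j$ is labeled $1$, and within $C_j$ take $g^{\star}$ to be the gate whose control position is earliest in the topological ordering of $C_j$. Such a $g^{\star}$ exists by hypothesis. I then claim that the neighbor $\y' := \y \circ \pi_{g^{\star}}$ satisfies $\y' \prec_{\rm lex} \y$, so $\y$ is not a local optimum.

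The central step is to verify that $\pi_{g^{\star}}$ does not alter any position that precedes $g^{\star}_{11}$ in the cost ordering. By the description of a gate-flipping permutation, $\pi_{g^{\star}}$ acts only on the four positions of $g^{\star}$'s own gadget, on the gadgets of the gates into which $g^{\star}$ feeds, and (if $g^{\star}$ is an output gate) on a single output position of circuit $C_j$. All of these live inside $C_j$, so every position of $C_0,\dots,C_{j-1}$ is fixed. Within $C_j$ itself, the only control positions that $\pi_{g^{\star}}$ can affect are that of $g^{\star}$ and those of gates that are topological descendants of $g^{\star}$; control positions of gates earlier in the topological order are not touched.

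Recalling the cost ordering ($C_0$ before $C_1$ before $\ldots$; within a circuit, control positions in topological order, then output positions, then the rest), every position strictly earlier than $g^{\star}_{11}$ falls into one of the categories just shown to be invariant under $\pi_{g^{\star}}$. Since $\pi_{g^{\star}}$ flips the labels on all four positions of $g^{\star}$'s gadget, the label at $g^{\star}_{11}$ itself changes from $1$ to $0$. Therefore $\y$ and $\y'$ agree on every position strictly earlier than $g^{\star}_{11}$, and $\y'$ has label $0$ where $\y$ has label $1$ at that position; hence $\y' \prec_{\rm lex} \y$, which is exactly what we need.

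I do not expect any real obstacle: the proof reduces to the bookkeeping observation that $\pi_g$ is ``local'' in the circuit $C_j$ and only propagates forward along the gate DAG, which is exactly the property that makes the topologically-sorted cost ordering cooperate with the generators. Everything else is a direct appeal to Observation~\ref{observation-control-bit} and the construction of the $\pi_g$.
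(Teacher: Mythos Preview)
Your proposal is correct and follows essentially the same route as the paper: pick the highest-ranking control position labeled $1$, apply the corresponding gate-flipper $\pi_{g^\star}$, and observe that only $g^\star$ and its topological successors in the same circuit are touched, so the lexicographic cost strictly decreases. You merely spell out in more detail why no earlier position is affected, but the argument is the same.
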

\begin{proof}
  Among all control positions assigned $1$, let $g_{11}$ 
  be the highest-ranking (i.e., of smallest index). 
  Now apply $\pi_g$, i.e., flip gate $g$, which inverts the bit by the previous lemma. Under 
  $\y \circ \pi_g$, the control position 
  $g_{11}$ is now correct; successor gates $h$ in the same circuit might now become incorrect, but their 
  control positions have lower rank by our ordering; 
  gates in other circuits are not affected. 
  Thus, $\y \circ \pi_g \prec_{\rm lex} \y$, and 
  $\y$ is not a local optimum.
\end{proof}

\begin{corollary}
    In any local optimum all sub circuits are correctly evaluated.
\end{corollary}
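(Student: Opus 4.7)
The plan is to combine the two preceding lemmas; essentially all the work has already been done. Let $\pi \in G$ be a local optimum of the \localmin{$k$} instance and set $\y := \y_{\rm start} \circ \pi$. I would first note that $\y$ is well-behaved: the starting assignment $\y_{\rm start}$ was set up to be well-behaved, and both generator types ($\pi_g$ for every gate $g$ and $\sigma_i$ for every input coordinate $i$) were shown during the description of the reduction to map well-behaved assignments to well-behaved assignments. Since $\pi$ is a word in these generators, $\y$ is well-behaved.

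Next I would invoke the contrapositive of the immediately preceding lemma. That lemma says: if $\y$ is well-behaved and some control position $g_{11}$ carries label $1$, then there exists a single generator application (namely $\pi_g$ for the highest-ranking violated control position) that strictly decreases the lexicographic cost, so $\y$ cannot be a local optimum. Since by assumption $\y$ is a local optimum, every control position $g_{11}$ must carry label $0$ under $\y$.

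Finally I would apply the first lemma of this subsection, which takes as hypothesis exactly what we have just established: $\y$ is well-behaved and $y(g_{11}) = 0$ for every gate $g$. Its conclusion is that, writing $\x$ for the input block of $C_0$ under $\y$, the input block of each $C_j$ ($j \geq 1$) is $\x \oplus \e_j$ and the output block of each $C_j$ equals $C(\x \oplus \e_j)$ (with the analogous statement $C(\x)$ for $C_0$). This is precisely the statement that all sub-circuits are correctly evaluated, yielding the corollary.

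There is no real obstacle; the only point requiring a moment's care is that well-behavedness is preserved along the entire orbit, which is a direct consequence of how the generators were constructed. Once that is in hand, the corollary is an immediate chaining of the two lemmas.
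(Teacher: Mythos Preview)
Your proposal is correct and follows exactly the route the paper intends: the corollary is stated without proof precisely because it is the immediate chaining of the two preceding lemmas you invoke (well-behavedness of any orbit element, contrapositive of the ``control position $=1 \Rightarrow$ not locally optimal'' lemma, then the ``all control positions $=0 \Rightarrow$ correct evaluation'' lemma). You have simply spelled out what the paper leaves implicit.
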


The previous lemmas suffice to show the main result.

\begin{lemma}
    Let $\y  = \y_{\rm start} \circ \pi \in \{0,1\}^V$ 
    be a local optimum an instance of 
    
    \noindent$\localmin{k}$. 
    Suppose the input variables of $C_0$ are mapped 
    to some $\x \in \{0,1\}^n$. Then $\x$ is a local optimum of the FLIP instance.
\end{lemma}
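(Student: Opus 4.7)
By the preceding corollary, every control position in $\y$ is labeled $0$, so every sub-circuit is correctly evaluated. In particular, if the input positions of $C_0$ encode $\x \in \{0,1\}^n$, then the output positions of $C_0$ encode $C(\x)$, and for each $j \in [n]$ the output positions of $C_j$ encode $C(\x \oplus \e_j)$.

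I would prove the contrapositive: suppose $\x$ is \emph{not} a local optimum of the \FLIP\ instance, i.e., there exists $i \in [n]$ with $C(\x \oplus \e_i) \prec_{\rm lex} C(\x)$, and derive a contradiction by showing that $\y \circ \sigma_i \prec_{\rm lex} \y$ in the ordering on $V$, contradicting the local optimality of $\y$.

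The crux is to track what $\sigma_i$ does to the positions of $C_0$, which is the highest-priority block in the lexicographic ordering. The permutation $\sigma_i$ swaps every position of $C_0$ with its corresponding position of $C_i$, so in $\y \circ \sigma_i$ the positions of $C_0$ carry exactly the labels that $C_i$ carried under $\y$. Since $C_i$ was correctly evaluated in $\y$, three things hold in $\y \circ \sigma_i$: (a) the control positions of $C_0$ are still all $0$; (b) the output positions of $C_0$ encode $C(\x \oplus \e_i)$; (c) the input positions of $C_0$ encode $\x \oplus \e_i$. The remaining circuits $C_j$ ($j \neq 0, i$) may now contain incorrectly evaluated gates (their $i$-th input bit has been flipped), but these positions all come after the positions of $C_0$ in the ordering and are therefore irrelevant for the lexicographic comparison that follows.

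Now compare $\y$ and $\y \circ \sigma_i$ position by position in the prescribed order. The control positions of $C_0$ agree (all zero on both sides), so the first block where the strings can differ is the block of output positions of $C_0$: here $\y$ reads $C(\x)$ and $\y \circ \sigma_i$ reads $C(\x \oplus \e_i)$. By our assumption $C(\x \oplus \e_i) \prec_{\rm lex} C(\x)$, hence $\y \circ \sigma_i \prec_{\rm lex} \y$, contradicting local optimality. The main thing to be careful about is the second point above, that all control positions of $C_0$ remain $0$ after applying $\sigma_i$; this is what makes $\sigma_i$ a genuine neighbor move that goes through and is not blocked by a suddenly activated higher-priority control bit. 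Everything else is bookkeeping about the ordering on $V$.
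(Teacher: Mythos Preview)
Your proof is correct and follows essentially the same approach as the paper: both argue by contradiction, apply the circuit-swap permutation $\sigma_i$ for an improving neighbor $\x \oplus \e_i$, and observe that the control positions of $C_0$ remain all zero while the output block of $C_0$ strictly improves, with any newly incorrect gates in the other circuits being of lower priority. The only cosmetic difference is that the paper phrases the last point as ``control positions in $C_i$ with $i \geq 1$ might now be $1$, but their priority is less than that of $C_0$'s output,'' whereas you phrase it in terms of all non-$C_0$ positions coming after $C_0$; both are correct given the ordering.
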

\begin{proof}
    Since $\y$ is well-behaved, the input variables 
    of $C_i$ are mapped to $\x \oplus \e_i$. 
    Since $\y$ is a local minimum, by 
    the corollary, the output values of $C_j$ are 
    in fact mapped to the correct output $C(\x \oplus \e_j)$. 

    Suppose now that the mapped input is not a local optimum for the \FLIP{} instance. Then there must be neighbor with a better output, i.e., 
    $C(\x \oplus e_j) \prec_{\rm lex} C(\x)$. 
    Now consider $\y' = \y \circ \sigma_j$. Under 
    $\y'$, the control positions of $C_0$ 
    are all mapped to $0$ (because those of $C_j$ were 
    under $\y$); the output of $C_0$ under $\y'$ is 
    better than under $\y$ because 
    $C(\x \oplus e_j) \prec_{\rm lex} C(\x)$. 
    Control positions in $C_i$ with $i \geq 1$ 
    might now be $1$, but their priority is less 
    than that of $C_0$'s output. Thus, $\y'$ is 
    better than $\y$, and $\y$ is not a local optimum.
\end{proof}

This concludes the proof of 
Theorem~\ref{theorem-PLS-complete}.
Every permutation used in the reduction has an order of two, so it is an involution. Still, these permutations do not form a commutative group due to the $\sigma_i$ permutations. This is to be expected as even finding a global optimum for the lexicographical leader of a string under an abelian permutation group where every element has an order of two is polynomial time solvable\cite[Section 3.1]{DBLP:conf/stoc/BabaiL83}.

We additionally note that the reduction is tight and hence finding a local optimal bitstring under permutations via the standard algorithm is \PSPACE-complete.

\begin{theorem}
    The given reduction is tight.
\end{theorem}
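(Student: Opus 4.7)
I would verify the three conditions of a tight \PLS-reduction using the set $\mathcal{R}$ of well-behaved assignments $\y \in \{0,1\}^V$ in which every control position $g_{11}$ is labeled $0$---equivalently, those in which every circuit $C_0,\ldots,C_n$ correctly evaluates its input. Condition~(1) is immediate from the corollary above, and for Condition~(2), given $\x \in \{0,1\}^n$ I would build $t \in \mathcal{R}$ by directly simulating the $n+1$ circuits: place $\x$ in the input positions of $C_0$, place $\x \oplus \e_j$ in the input positions of $C_j$, and fill in each gate gadget with the state encoding its correct evaluation. This is polynomial time and yields $g(I,t) = \x$.

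The main work is Condition~(3). Let $q = \y^{(0)} \to \y^{(1)} \to \cdots \to \y^{(k)} = q'$ be a cost-decreasing path in $TG(f(I))$ with $q, q' \in \mathcal{R}$ and every $\y^{(j)}$ ($0 < j < k$) outside $\mathcal{R}$, and set $\x := g(I,q)$. The first step cannot be a gate-flip $\pi_g$: since every gate is correct under $q$, $\pi_g$ would flip the control position $g_{11}$ from $0$ to $1$, a strict cost increase at a position of maximal lex-priority, contradicting that this is a $TG$-edge. Hence $\y^{(1)} = q \circ \sigma_i$ for some $i$, and because the step is cost-decreasing while the $C_0$ control bits agree (both all $0$) the first differing positions lie in the outputs of $C_0$, forcing $C(\x \oplus \e_i) \prec_{\rm lex} C(\x)$. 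Under $\y^{(1)}$, the circuits $C_0$ and $C_i$ are still correctly evaluated, but each other $C_j$ has had its $i$-th input flipped without its internal gates updating, so at least one of its control bits is $1$ and $\y^{(1)} \notin \mathcal{R}$.

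The remaining steps must bring us back to $\mathcal{R}$, and I would show that they consist only of $\pi_g$ applications to currently-incorrect gates, processed in topological order within each broken $C_j$; each such application strictly decreases the lex-earliest nonzero control bit of $C_j$ from $1$ to $0$, possibly introducing $1$'s at topologically-later control positions of the same circuit which are lex-dominated. This forces $q'$ to be the canonical $\mathcal{R}$-assignment with $C_0$ input $\x \oplus \e_i$, so $g(I,q') = \x \oplus \e_i$; combined with $C(\x \oplus \e_i) \prec_{\rm lex} C(\x)$, this is exactly the arc $\x \to \x \oplus \e_i$ in $TG(I)$ required by Condition~(3). Finally, tightness together with the known PSPACE-completeness of the standard algorithm for \FLIP\ yields the corresponding PSPACE-completeness for \localmin{$k$}.

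\textbf{Main obstacle.} The delicate step is ruling out paths that interleave a further $\sigma_{i'}$ permutation during the repair phase. A candidate $\sigma_{i'}$ at an intermediate state would swap $C_0$ with the (generally still-inconsistent) $C_{i'}$ and thereby inherit nonzero $C_0$-control bits, raising the cost at the highest-priority positions; the case where one has first repaired $C_{i'}$ via a block of $\pi_g$'s and \emph{then} applies $\sigma_{i'}$ is the one that requires careful accounting to show the path actually passes through (or can be rerouted through) a canonical intermediate assignment in $\mathcal{R}$, contradicting the assumption that no intermediate lies in $\mathcal{R}$. Carrying out this case analysis, together with checking that the position-ordering in the cost function precludes unintended tie-breaks on internal gate positions, is the technical heart of the proof.
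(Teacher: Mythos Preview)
Your choice of $\mathcal{R}$ is too restrictive, and the ``main obstacle'' you flag is not merely delicate but an actual counterexample. Take $n \geq 3$ and an $\x$ with $C(\x \oplus \e_i) \prec_{\rm lex} C(\x)$ and $C(\x \oplus \e_i \oplus \e_{i'}) \prec_{\rm lex} C(\x \oplus \e_i)$ for some distinct $i, i' \in [n]$. From $q \in \mathcal{R}$ with $C_0$-input $\x$, apply $\sigma_i$ (cost-decreasing; all $C_j$ with $j \notin\{0,i\}$ become broken), then repair \emph{only} $C_{i'}$ via $\pi_g$'s in topological order (each step cost-decreasing and touching no other circuit), then apply $\sigma_{i'}$ (cost-decreasing, since the new $C_0$ inherits zero control bits and strictly smaller output), and finally repair everything to reach $q' \in \mathcal{R}$. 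At no intermediate point are all control bits zero---some $C_j$ with $j \notin \{0,i,i'\}$ stays broken throughout---so no intermediate lies in your $\mathcal{R}$. But $g(I,q') = \x \oplus \e_i \oplus \e_{i'}$, which is \emph{not} a neighbor of $\x = g(I,q)$ in \FLIP, so condition~(3) fails for this path. Your hoped-for resolution (``the path passes through $\mathcal{R}$'') simply does not hold here, and rerouting is irrelevant since condition~(3) must be verified for every such path.

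The paper sidesteps all of this by taking $\mathcal{R}$ to be the set of \emph{all} feasible solutions, i.e., all permutations in the generated group. Then any path in $TG(f(I))$ with endpoints in $\mathcal{R}$ and no interior vertex in $\mathcal{R}$ is necessarily a single edge, and the case analysis is immediate: a $\pi_g$-edge leaves the $C_0$-input unchanged, so $g(I,q)=g(I,q')$; a $\sigma_j$-edge flips exactly one input bit, so $(g(I,q), g(I,q'))$ is an arc in $TG(I)$. The lesson is that enlarging $\mathcal{R}$ can make tightness dramatically easier to verify.
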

\begin{proof}
    Let $I$ be an instance of \FLIP\ and $(f,g)$ the previously defined reduction.
    We use as $\mathcal{R}$ simply the set of all permutations in the group of the generators which necessarily contains all local optima. We can find for any solution $s$ of $I$ in polynomial time a solution $\pi$ with $g(I,\pi) = s$ in $\mathcal{R}$ by applying the $\sigma_j$ permutations to construct the needed input string. This requires applying at most $n$ permutations which can be done in polynomial time. Finally, we see that any path where only the endpoints are in $\mathcal{R}$ must be an edge. If the edge is due to an $\pi_i^j$ permutation the input does not change and hence both endpoints are mapped to the same solution. If the edge is alternatively due to a $\sigma_j$ permutation this changes the input in one variable and is hence an edge in $TG(I)$ as it corresponds directly to a flip there. 
\end{proof}

\section{Realizing the permutations in propositional formula}

In the problem \localminsolution{} we are given a CNF formula $F$ over some variables $V$, a satisfying assignment 
$\alpha: V \rightarrow \{0,1\}$, and a list of permutations $\pi_1, \dots, \pi_k$ on $V$ such that 
$F$ is invariant under $\pi_i$ (that is, when applying $\pi_i$ to each variable occurrence in $F$, the resulting formula $F'$  is equal 
to $F$ up to a re-ordering of the clauses and the literals therein). The task now is to find a satisfying assignment $\beta$ of $F$ that 
such that $\beta \circ \pi_i \succeq_{\rm lex} \beta$.
This is clearly in PLS: whether $F$ is invariant under the $\pi_i$ and whether $\beta$ satisfies $F$ are both  easy to check. Note that it is 
not required that $\beta$ be in  the orbit of $\alpha$ under $\langle\pi_1,\dots,\pi_k\rangle$. 

\begin{theorem}
\localminsolution{} is PLS-complete.
\end{theorem}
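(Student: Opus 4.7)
The plan is to reduce from \FLIP{} by extending the construction of Theorem~\ref{theorem-PLS-complete}. We reuse the position set $V$, the generators $\pi_i^j$ and $\sigma_j$, and the starting assignment $\alpha := \y_{\rm start}$ from that reduction. The new ingredient is a CNF formula $F$ on the variables $V$ that is syntactically invariant under each generator and whose satisfying assignments are precisely the well-behaved assignments of Definition~\ref{definition-well-behaved}. With $F$ in hand, every local minimum $\beta$ of the resulting \localminsolution{} instance $(F,\alpha,\{\pi_i^j,\sigma_j\})$ is automatically well-behaved (because $\beta$ satisfies $F$) and locally lex-minimal under the generators; the chain of lemmas already proved before Theorem~\ref{theorem-PLS-complete} then applies verbatim to show that the input bits of $C_0$ read off $\beta$ form a local minimum of the original \FLIP{} instance, which is the image of $\beta$ under the recovery map $g$.

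Constructing $F$ amounts to expressing each local condition of Definition~\ref{definition-well-behaved} as constant-width clauses: gate-gadget validity for every gate, agreement on every wire (between successive gates, between input positions and gates, and between output gates and output positions), and the cross-circuit input relations $x_i^{(j)} = x_i^{(0)}$ for $i \neq j$ and $x_i^{(j)} \neq x_i^{(0)}$ for $i = j$. Each constraint involves only a constant number of positions and there are polynomially many constraints, so $|F|$ is polynomial and $\alpha$ satisfies $F$ by construction.

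First I would fix the precise clause templates in the expanded view. Next I would verify syntactic invariance of $F$ under each generator: each generator is an automorphism of the combinatorial structure (gates, wires, circuit indices) underlying the clause templates, and therefore permutes the instances of each template among themselves. Including each constraint symmetrically in the two slots of the expanded view then ensures the clause set is literally closed under every generator. Finally I would confirm that the recovery map $g(\beta) := $ ``input bits of $C_0$ under $\beta$'' is a valid \PLS-reduction by invoking the well-behavedness and local-minimality lemmas from the previous section.

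The main obstacle is the syntactic-invariance step. Semantic invariance---that each generator maps well-behaved assignments to well-behaved assignments---is already established, but the definition of \localminsolution{} requires the stronger property that the clause set of $F$ be literally permuted by each generator. This is most delicate in the expanded view, where ``flip bit $i$'' is realized as a transposition $(i_0,i_1)$ rather than a relabeling; handling it requires stating each local template in a form that is literally preserved by every generator, which increases the clause count by at most a constant factor but is otherwise a careful bookkeeping exercise.
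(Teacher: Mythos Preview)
Your proposal is correct and takes essentially the same approach as the paper: build a CNF formula over the expanded (twin-variable) position set whose satisfying assignments are exactly the well-behaved assignments, verify syntactic invariance of this clause set under every $\pi_g$ and $\sigma_i$, and then invoke the lemmas preceding Theorem~\ref{theorem-PLS-complete} to recover a \FLIP{} local minimum from any locally lex-minimal satisfying assignment. The paper merely carries this out more concretely---it writes down an explicit eight-line gate formula $F_g$ (over the variables $u,v,w,g_{ab}$ and their twins $\tilde{u},\tilde{v},\tilde{w},\tilde{g}_{ab}$, with the anchoring clauses $u\leftrightarrow\neg\tilde{u}$) and an explicit cross-circuit formula $H$, and checks line by line that each generator permutes these clauses among themselves---but your high-level plan and your identification of the syntactic-invariance step as the crux match the paper exactly.
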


\begin{proof}
We will define a formula $F$ whose satisfying assignments correspond exactly the 
well-behaved assignments to the positions, as defined in Definition~\ref{definition-well-behaved}. 
We first describe how to encode one circuit of the total $n+1$ circuits. We have input variables 
$x_1, \dots, x_n$ to the circuit; we introduce one {\em gate output variable} $g_{\rm out}$ for each gate; 
and four {\em gate control variables}
$g_{00}, g_{01}, g_{10}, g_{11}$ for the four positions in the square-representation of that gate, 
as in Figure~\ref{fig:enter-label}. For each such variable $u$ we introduce its twin $\tilde{u}$ and add 
$(u \leftrightarrow \neg \tilde{u})$. In other words, the (positive) literal $\tilde{u}$ simulates the negative literal $\bar{u}$. 
Take a gate $g$, let $u, v$ be its inputs and 
$w$ its output. 
The following formula $F_g$ ensures that the gate control variables $g_{ab}$ are set correctly as required 
for a well-behaved assignment:
\begin{align*}
(u \wedge v \wedge w) \rightarrow (\tilde{g}_{00} \wedge \tilde{g}_{01} \wedge \tilde{g}_{10} \wedge g_{11}) \\
(u \wedge v \wedge \tilde{w}) \rightarrow (g_{00} \wedge g_{01} \wedge g_{10} \wedge \tilde{g}_{11}) \\
(u \wedge \tilde{v} \wedge w) \rightarrow (\tilde{g}_{00} \wedge \tilde{g}_{01} \wedge g_{10} \wedge \tilde{g}_{11}) \\
(u \wedge \tilde{v} \wedge \tilde{w}) \rightarrow (g_{00} \wedge g_{01} \wedge \tilde{g}_{10} \wedge g_{11}) \\
(\tilde{u} \wedge v \wedge w) \rightarrow (\tilde{g}_{00} \wedge g_{01} \wedge \tilde{g}_{10} \wedge \tilde{g}_{11}) \\
(\tilde{u} \wedge v \wedge \tilde{w}) \rightarrow (g_{00} \wedge \tilde{g}_{01} \wedge g_{10} \wedge g_{11}) \\
(\tilde{u} \wedge \tilde{v} \wedge w) \rightarrow (g_{00} \wedge \tilde{g}_{01} \wedge \tilde{g}_{10} \wedge \tilde{g}_{11}) \\
(\tilde{u} \wedge \tilde{v} \wedge \tilde{w}) \rightarrow (\tilde{g}_{00} \wedge g_{01} \wedge g_{10} \wedge g_{11}) \\
\end{align*}
$F_g$ can easily be written as a CNF formula.
The permutation $\pi_g$ amounts to flipping the output of gate $g$ and flipping the corresponding inputs at those gates $h$ 
that $g$'s output feeds into. Thus, $\pi_g$ is 
\begin{align}
(w\tilde{w}) (g_{00} \tilde{g}_{00}) (g_{01} \tilde{g}_{01}) (g_{10} \tilde{g}_{10}) (g_{11} \tilde{g}_{11}) 
\circ (\textnormal{stuff at successor gates})
\label{permutation-flip-output}
\end{align}
$F_g$ is invariant under $\pi_g$ (even when we write it as a CNF). Next, 
there is a permutation $\sigma$ that flips the input $u$ of $g$. This swaps the two rows of the square representation of $g$: 
\begin{align}
\textnormal{(stuff at predecessor gates}) \circ (u \tilde{u}) 
(g_{00} g_{10})(\tilde{g}_{00} \tilde{g}_{10}) (g_{01} g_{10})(\tilde{g}_{01} \tilde{g}_{11})
\label{permutation-flip-input}
\end{align}
If $u$ is the output of some gate $h$, then $\sigma$ is $\pi_h$ and ``stuff at predecessor gate'' is what we describe 
in (\ref{permutation-flip-output}); otherwise $u$ is an input variable $x_i$ and ``stuff at predecessor gate'' does
nothing---for now. 

This formula describes well-behaved assignments in one of the $n+1$ circuits $C_0, \dots, C_n$. We now create 
$n+1$ copies of this formula, introducing a fresh version of each variable, so 
the \nth{$j$} input variable $x_j$ becomes $x^{(i)}_j$ in $C_i$, and $g_{00}$ becomes $g^{(i)}_{00}$ and so on. We create a 
formula $H$ to ensure that $x^{(i)}_j$ and $x^{(0)}_j$ differ if and only if  $i = j$: 
\begin{align}
  H := \bigwedge_{\{i_1, i_2\} \in {\{0,\dots,n\} \choose 2}} \bigwedge_{j=1}^n 
  \left\{ 
  \begin{array}{cc}
    \left(x^{(i_1)}_j \leftrightarrow x^{(i_2)}_j\right) \wedge 
    \left(\tilde{x}^{(i_1)}_j \leftrightarrow \tilde{x}^{(i_2)}_j\right) 
    & \textnormal{ if $j \not \in \{i_1, i_2\}$} \\
    & \\
    \left(x^{(i_1)}_j \leftrightarrow \tilde{x}^{(i_2)}_j\right) \wedge 
    \left(\tilde{x}^{(i_1)}_j \leftrightarrow x^{(i_2)}_j\right)
    & \textnormal{ if $j \in \{i_1, i_2\}$}  \ . 
  \end{array}
  \right\}
  \label{formula-syntactically-correct}
\end{align}
The permutation $\sigma_i$ flipping circuit $C_i$ and $C_0$ and inverting $x^{(i')}_i$ for all other $i'$ can be written as 
\begin{align*}
& \left(u^{(i)}_j u^{(0)}_j \right)\left(\tilde{u}^{(i)}_j \tilde{u}^{(0)}_j \right) 
\tag{for each input and gate output and control variable $u$}
\\
\circ & 
\left( x^{(i')}_i \tilde{x}^{(i')}_i \right) \tag{for all $i' \in \{1,\dots,n\} \setminus \{i\}$} 
\circ \left(\textnormal{do (\ref{permutation-flip-input}) at gates having $x^{(i')}_i$ as input} \right)
\end{align*}
This forms the final formula $F = H \wedge \bigwedge_{i=0}^n \bigwedge_{\textnormal{ gate $g$}} F^{(i)}_g$. Its satisfying assignments 
correspond exactly to the well-behaved assignments described above
and that each $\pi_g$ and each $\sigma_i$ is indeed a symmetry of $F$. The order of the variables is as in the previous proof: of highest priority 
are the control variables $g^{(0)}_{11}$ (following the topological order of the gates in the circuit); 
then the output variables of $C_0$; then to the control variables of the other circuits; then all the rest.
It is easy to provide an ``initial'' satisfying assignment $\alpha \in \sat(F)$. 
Finally, if $\beta$ is some satisfying assignment of $F$ that is locally minimal, i.e., cannot be improved by applying any 
$\pi_g$ or $\sigma_i$, then $\beta$ represents a configuration in which all circuits $C_0, \dots, C_n$ are correctly evaluated and 
no $C_i$ outputs something better than $C_0$; in other words, a local optimum of FLIP. This shows that 
\localminsolution{} is PLS-complete.
\end{proof}

\section{Conclusion and open questions}

We have shown that the problem is \PLS-hard in general and hence a polynomial time algorithm is unlikely unless \P=\PLS. In the theory of \PLS-complete problems we thereby demonstrated another example of a hard problem with lexicographic weights. The used permutations are realistic in the sense that they can occur in an actual CNF formula. 

Our above reduction requires polynomially many permutations. There is an efficient algorithm 
for the case of {\em one} permutation. What about when we are given a constant number of  permutations?  

What about if the given permutations form an Abelian group? Is it still PLS-hard to find a local minimum?


The neighborhood of a solution $\pi$ is in our work defined as $\pi \circ \pi_i$ for some generator $\pi_i$. An alternative neighborhood would be $\pi_i \circ \pi$, which would place the generator between the string and the current permutation. While the results of the one permutation case trivially hold again, the hardness proof does not transform to this formulation.
\bibliography{bibliography}

\appendix

\section{Condensed and expanded view.}

We expand on the encoding of a position $x$ with two positions $x_0$ and $x_1$ which was needed to allow the number of ones and zeros to change.
The permutations make sure that $x_0$ and $x_1$ always stick together as inverting a position swaps $x_0$ and $x_1$ and swaps between circuits swap both elements. This allows us to always change a position by inverting it.
Additionally, we note that we can order the positions in a way such that any local optimum in the condensed view is a local optima is the expanded view and vice versa. Let $X$ be the set of positions and $f: X \to \mathbb{N}$ be a function mapping the positions to their importance. Then we can use the alternative map $f'$ with $f'(x_0) = 2 f(x)$ and $f'(x_1)=2 f(x) + 1$. 

In a local optima $y$ of the condensed view there exists for every permutation $\pi$ in the group a smallest index $i$ at which $y$ and $y \circ \pi$ differ. Then they differ also at $2i$ in the expanded view. All position before $2i$ do not differ in the expanded view as well and since $y$ is a local optima, the position $2i$ in the expanded view is mapped to zero. In the expanded view of $y \circ \pi$ the position $2i$ must be mapped to one by the assumption and hence it is not improving here as well. 

\end{document}